\newtheorem{defn}{Definition}
\newtheorem{thm}{Theorem}[section]
\newtheorem{cor}[thm]{Corollary}
\newtheorem{prop}{Proposition}
\newtheorem{lem}[thm]{Lemma}
\newtheorem{conj}[thm]{Conjecture}
\newtheorem{constr}[thm]{Construction}
\newtheorem{note}{Remark}
\newtheorem{example}{Example}
\newcommand{\bit}{\begin{itemize}}
\newcommand{\eit}{\end{itemize}}
\newcommand{\bcor}{\begin{cor}}
\newcommand{\ecor}{\end{cor}}
\newcommand{\beq}{\begin{equation}}
\newcommand{\eeq}{\end{equation}}
\newcommand{\beqn}{\begin{equation*}}
\newcommand{\eeqn}{\end{equation*}}
\newcommand{\bea}{\begin{eqnarray}}
\newcommand{\eea}{\end{eqnarray}}
\newcommand{\bean}{\begin{eqnarray*}}
\newcommand{\eean}{\end{eqnarray*}}
\newcommand{\ben}{\begin{enumerate}}
\newcommand{\een}{\end{enumerate}}
\newcommand{\bdefn}{\begin{defn}}
\newcommand{\edefn}{\end{defn}}
\newcommand{\bnote}{\begin{note}}
\newcommand{\enote}{\end{note}}
\newcommand{\bprop}{\begin{prop}}
\newcommand{\eprop}{\end{prop}}
\newcommand{\blem}{\begin{lem}}
\newcommand{\elem}{\end{lem}}
\newcommand{\bthm}{\begin{thm}}
\newcommand{\ethm}{\end{thm}}
\newcommand{\bconj}{\begin{conj}}
\newcommand{\econj}{\end{conj}}
\newcommand{\bconstr}{\begin{constr}}
\newcommand{\econstr}{\end{constr}}
\newcommand{\bpf}{\begin{proof}}
\newcommand{\epf}{\end{proof}}
\title{Codes with Locality for Two Erasures}
\author{N. Prakash, V. Lalitha and P. Vijay Kumar
\thanks{N. Prakash, V. Lalitha and P. Vijay Kumar are with the Department of ECE, Indian Institute of Science, Bangalore,
560 012 India (email: \{prakashn, lalitha, vijay\}@ece.iisc.ernet.in).}
\thanks{This research is supported in part by the National Science Foundation under Grant 0964507 and in part by the NetApp
Faculty Fellowship program. The work of V. Lalitha is supported by a TCS Research Scholarship. } }
\begin{document}


\maketitle

\begin{abstract}
 In this paper, we study codes with locality that can recover from two erasures via a sequence of two local, parity-check computations. By a local parity-check computation, we mean recovery via a single parity-check equation associated to small Hamming weight.  Earlier approaches considered recovery in parallel; the sequential approach allows us to potentially construct codes with improved minimum distance.  These codes, which we refer to as locally 2-reconstructible codes, are a natural generalization along one direction, of codes with all-symbol locality introduced by Gopalan \textit{et al}, in which recovery from a single erasure is considered. By studying the Generalized Hamming Weights of the dual code, we derive upper bounds on the minimum distance of locally 2-reconstructible codes and provide constructions for a family of codes based on Tur\'an graphs, that are optimal with respect to this bound. The minimum distance bound derived here is universal in the sense that no code which permits all-symbol local recovery from $2$ erasures can have larger minimum distance regardless of approach adopted. Our approach also leads to a new bound on the minimum distance of codes with all-symbol locality for the single-erasure case.
\end{abstract}

\section{Introduction}\label{sec:intro}

A primary goal in distributed data storage is the efficient repair of a failed node. While regenerating codes~\cite{DimGodWuWaiRam} aim to minimize the amount of data download needed to carry out node repair, codes with locality~\cite{GopHuaSimYek} seek to minimize the number of nodes accessed during node repair.  The focus of the present paper is on codes with locality. 

Let $\mathcal{C}$ denote an $[n, k, d_{\min}]$ linear code having block length $n$, dimension $k$ and minimum distance $d_{\min}$. Where the minimum distance is not relevant, we will simply refer to $\mathcal{C}$ as an $[n,k]$ code.  
The $i^{\text{th}}$ code-symbol $c_i$, $1 \leq i \leq n$, of the code $\mathcal{C}$ is said to have locality $r$ if this symbol can be recovered by accessing at most $r$ other code symbols and performing a linear computation. Equivalently, there exists a row in the parity-check matrix $H$ of the code of Hamming weight $\leq (r + 1)$, whose support includes $i$. A systematic code in which all the $k$ message symbols have locality $r$ is said to have  information locality $r$.  The minimum distance $d_{\min}$ of a code with information locality $r$ is upper bounded~\cite{GopHuaSimYek} by
\begin{equation} \label{eq:gopalan_bound}
d_{\min} \leq   n- k -\left\lceil \frac{k}{r} \right\rceil + 2.
\end{equation}
The pyramid-code construction in~\cite{HuaCheLi} yields optimal codes for all $\{n,k,r\}$ with field size $O(n)$.   
The authors of \cite{GopHuaSimYek} also introduce the notion of all-symbol locality in which all code symbols, not just the message symbols, have locality $r$.  They show the existence of codes with all-symbol locality that achieve the bound in \eqref{eq:gopalan_bound} when $(r+1) \mid n$, but leave open the question as to whether it is possible to derive a tighter bound in the all-symbol-locality case, for general $\{n,k,r\}$.     The all-symbol-locality property is preferable in applications as it permits a uniform approach to storage-system design.  Codes with locality also go by the name locally-repairable~\cite{PapDim}, locally-reconstructible~\cite{HuaSimXu_etal_azure} and locally-recoverable codes~\cite{TamBar}.

\subsection{Handling Multiple Erasures} \label{sec:intro_multiple}

There is current practical interest in the handling of multiple erasures as simultaneous node failures are not uncommon, given the increasing trend towards replacing expensive servers with low-cost commodity servers, the presence of ``hot" nodes etc.  Several approaches to the multiple-erasure case in the context of codes with locality can be found in the literature. 

The authors of \cite{PraKamLalKum} handle multiple erasures by protecting each message symbol with a local code of length $\leq r+\delta-1$ and minimum distance $\geq \delta$, and derive the upper bound 
\begin{equation} \label{eq:rdelta_bound}
d_{\min} \leq n-k+1 - \left (\left\lceil \frac{k}{r} \right\rceil -1\right )(\delta-1).
\end{equation}
Pyramid codes, are once again shown to be optimal. This notion of locality is extended to the case when all code symbols are so protected and the existence of optimal codes with all-symbol locality is shown for the case when $(r+\delta-1)|n$.
Codes having the capability of locally recovering a failed node in the presence of any $\delta -1$ other node failures are also considered in \cite{JuaHolOgg}.  Constructions based on partial geometry are provided and their rates computed. 

A third approach to handling multiple erasures is presented in \cite{WanZha} in which the authors seek to protect each of the $k$ message symbols by $\delta-1$ support-disjoint local parities, each of length $\leq r+1$. The following upper bound is derived:
\begin{equation} \label{eq:chinese_bound}
d_{\min} \leq n-k+1 - \left ( \left \lceil \frac{(k-1)(\delta-1)+1}{(r-1)(\delta-1)+1} \right \rceil - 1\right),
\end{equation}
and the existence of optimal codes is established for the case when $n \geq k(r(\delta-1)+1)$.
The setting is extended to codes with all-symbol locality for handling $2$ erasures, as well.  A square-code construction that achieves the bound in \eqref{eq:chinese_bound} for restricted values of the code dimension $k$ is presented.  A related setting appears in \cite{TamBar}, where once again $\delta-1$ support-disjoint local parities are used for the protection of all code symbols.  Here however, the local parities are permitted to have different lengths. A key feature of this work is that the authors provide constructions of codes in which the code alphabet is small, on the order of the code length.  Lower bounds to the minimum distance of the codes constructed are also provided.
 
A common underlying theme of the prior approaches in \cite{PraKamLalKum}, \cite{JuaHolOgg}, \cite{WanZha}, \cite{TamBar} is that they implicitly assume the need for the recovery of multiple erased symbols in parallel. However, the need for locality does not preclude a sequential approach such as is adopted here. The sequential approach places a less-stringent requirement on the code and potentially allows us to construct codes with improved minimum distance while still enabling local recovery from erasures. In addition, the minimum distance bound derived here is universal in the sense that no code which permits all-symbol local recovery from $2$ erasures can have larger distance regardless of approach adopted.  The exact formulation and our approach to solving the problem are presented in Section~\ref{sec:approach}.

\subsection{Other Related Work}

Explicit constructions of optimal codes with all-symbol locality for the single erasure case are provided in \cite{SilRawVis}, \cite{TamPapDim}, respectively based on Gabidullin maximum rank-distance and Reed-Solomon codes. Families of codes with all-symbol locality with small alphabet size (low field size) are constructed in \cite{TamBar}.  Locality in the context of non-linear codes  is considered in \cite{PapDim}. Codes with local regeneration are considered in \cite{KamPraLalKum}, \cite{SilRawKoyVis}, \cite{KamSil_etal}. Studies on the implementation and performance evaluation of codes with locality can be found in \cite{HuaSimXu_etal_azure, sathiamoorthy}.

Section~\ref{sec:ghw} provides background on generalized Hamming weights (GHW).  Our formulation and approach to the problem are outlined in Section~\ref{sec:approach}. An important connection between the $k$-cores of \cite{GopHuaSimYek} and GHW is made in Section~\ref{sec:kcore_ghw}. The upper bound on $d_{\min}$ and optimal code constructions can be found in Sections~\ref{sec:bounds} and \ref{sec:optimal_codes} respectively.  The final section, Section~\ref{sec:all_symbol_single} presents the analogous $d_{\min}$ bound for the single-erasure case.   The proofs of most statements appear in the Appendix.


\section{Generalized Hamming Weights} \label{sec:ghw}

\begin{defn}
The $i^{th}$, $1 \leq i \leq k$, GHW~\cite{Wei,HelKloLevYtr} of an $[n, k]$ code ${\cal C}$ is the cardinality of the minimum support of an $i$-dimensional subcode of ${\cal C}$, i.e., 
\begin{equation}
d_i({\cal C}) \ = \ d_i \ = \  \min_{\substack{ \mathcal{D} < \mathcal{C} \\ \text{dim}(\mathcal{D}) = i }}
\left|\text{supp}({\cal D}) \right| ,
\end{equation}
where $\mathcal{D} < \mathcal{C}$, is used to denote a subcode $\mathcal{D}$ of $\mathcal{C}$ and $\text{supp}({\mathcal{D}}) \triangleq \cup_{\bf{c} \in \mathcal{D}} \text{supp}(\bf{c})$. 
\end{defn}

\vspace{0.1in}

It is well known that 
\begin{eqnarray}
d_{\min}(\mathcal{C}) = d_1 < d_2 < \ldots < d_k = n.\label{eq:ordering_GHW}
\end{eqnarray}   
The complement of the set $\{d_i, 1 \leq i \leq k\}$, in $[n]$, will be termed as the set of \textit{gap numbers} (more simply,
gaps) of the code ${\cal C}$ and denoted by $\{ g_i, \ 1 \leq i \leq n-k \}$, i.e.,
\begin{equation} 
\{ g_i, \ 1 \leq i \leq n-k \} \ = \ [n] \setminus \{ d_i, \ 1 \leq i \leq k \}.
\end{equation}
Similarly, let $\{d_j^{\perp}, \ 1 \leq j \leq n-k \}$ and $\{ g_i^{\perp}, \ 1 \leq i \leq k \}$ denote the GHWs and gaps of the dual code ${\cal C}^\perp$. The lemma below~\cite{Wei} relates the GHWs of $\mathcal{C}$ to the gaps of $\mathcal{C}^{\perp}$.

\vspace{0.1in}

\begin{lem} \label{lem:GHW_code_dual_relation}
\bean \label{eq:GHW_code_dual_relation}
d_i & = &  (n+1)-g^{\perp}_{k-i+1} , \ \ 1 \leq i \leq k, \\
d_{\min}(\mathcal{C}) & = & d_1 \ = \ (n+1)- g^{\perp}_{k}.
\eean 
\end{lem}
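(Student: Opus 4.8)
The statement is Wei's duality theorem, and the plan is to prove it through the support--dimension profile of $\mathcal{C}$. For $S\subseteq[n]$ write $\mathcal{C}_S:=\{\,c\in\mathcal{C}:\mathrm{supp}(c)\subseteq S\,\}$ for the subcode of codewords supported inside $S$. The first step is the reformulation $d_r(\mathcal{C})=\min\{\,|S| : \dim\mathcal{C}_S\ge r\,\}$, which holds because the support of an $r$-dimensional subcode is such an $S$, and conversely any such $S$ contains the support of some $r$-dimensional subcode. Setting $a_j:=\max_{|S|=j}\dim\mathcal{C}_S$, one checks that $0=a_0\le a_1\le\cdots\le a_n=k$ with $a_j-a_{j-1}\in\{0,1\}$ (deleting one coordinate drops $\dim\mathcal{C}_S$ by at most one), so $d_r$ is exactly the position of the $r$-th unit increase of the sequence $(a_j)$. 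Hence $\{d_i : 1\le i\le k\}$ is precisely the set of jump positions of $(a_j)$ and the gap set $\{g_i\}$ is the set of non-jump positions; the same holds for $\mathcal{C}^\perp$ with a profile $b_j:=\max_{|S|=j}\dim(\mathcal{C}^\perp)_S$. It therefore suffices to prove that $\{\,n+1-d_i : 1\le i\le k\,\}$ is the gap set of $\mathcal{C}^\perp$, since the strict orderings in \eqref{eq:ordering_GHW} then force the pairing $g^\perp_j=n+1-d_{k-j+1}$, i.e.\ $d_i=(n+1)-g^\perp_{k-i+1}$, with $i=1$ giving the second line.

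The heart of the argument is a dimension identity linking $\mathcal{C}$ and $\mathcal{C}^\perp$: for every $S$,
\[
\dim(\mathcal{C}^\perp)_S \;=\; |S|-k+\dim\mathcal{C}_{[n]\setminus S}.
\]
To get this I would argue in two moves. First, the restriction to $\mathcal{C}$ of the coordinate projection $\pi_S$ has kernel exactly $\mathcal{C}_{[n]\setminus S}$, so $\dim\pi_S(\mathcal{C})=k-\dim\mathcal{C}_{[n]\setminus S}$. Second, puncturing and shortening are dual operations, $\bigl(\pi_S(\mathcal{C})\bigr)^\perp=(\mathcal{C}^\perp)_S$ as codes of length $|S|$, whence $\dim(\mathcal{C}^\perp)_S=|S|-\dim\pi_S(\mathcal{C})$. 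Combining the two gives the displayed identity; taking the maximum over all $S$ with $|S|=j$ turns it into the recursion $b_j=j-k+a_{n-j}$.

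Finally I would just compute. From $b_j=j-k+a_{n-j}$ we get $b_{n+1-j}-b_{n-j}=1-(a_j-a_{j-1})$, so, because each increment of $(a_j)$ is $0$ or $1$, position $j$ is a jump of $(a_j)$ (that is, $j\in\{d_i\}$) if and only if position $n+1-j$ is \emph{not} a jump of $(b_j)$ (that is, $n+1-j$ is a gap of $\mathcal{C}^\perp$). Hence $\{\,n+1-d_i\,\}=\{\,g^\perp_j\,\}$; matching the decreasing list on the left with the increasing list on the right yields $n+1-d_{k-j+1}=g^\perp_j$, which is exactly the claim.

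The main obstacle is the linear-algebraic step in the middle paragraph --- in particular the fact that the dual of a punctured code is the shortening of the dual code --- together with keeping the two conventions (``supported inside $S$'' for shortening versus ``restricted to $S$'' for puncturing) straight. Once that identity and the ``increments are $0$ or $1$'' property are in place, the remainder is bookkeeping with monotone integer sequences and an order-matching of the two finite lists.
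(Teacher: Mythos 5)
Your argument is correct. Note, however, that the paper does not prove this lemma at all: it is quoted as a known result with a citation to Wei, so there is no in-paper proof to compare against. What you have written is a complete and sound self-contained derivation along the standard lines: the reformulation $d_r(\mathcal{C})=\min\{|S|:\dim\mathcal{C}_S\ge r\}$ is right, the support--dimension profile $a_j$ does satisfy $a_j-a_{j-1}\in\{0,1\}$ for exactly the reason you give, the identity $\dim(\mathcal{C}^\perp)_S=|S|-k+\dim\mathcal{C}_{[n]\setminus S}$ follows from the kernel computation for $\pi_S$ together with the fact that the dual of the punctured code is the shortened dual, and the resulting recursion $b_j=j-k+a_{n-j}$ makes position $j$ a jump of $(a_j)$ precisely when $n+1-j$ is a non-jump of $(b_j)$, i.e.\ a gap of $\mathcal{C}^\perp$. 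Matching the $k$ values $n+1-d_i$ (decreasing) with the $k$ gaps $g^\perp_j$ (increasing) then gives $d_i=(n+1)-g^\perp_{k-i+1}$, and $i=1$ gives the $d_{\min}$ statement, exactly as claimed. The only point worth making explicit when writing it up is the identification of $(\mathcal{C}^\perp)_S$ (vectors of length $n$ supported in $S$) with a code of length $|S|$, which is the convention issue you already flag; with that spelled out, the proof stands as a valid replacement for the citation.
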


\vspace{0.1in}
We also note that if $\mathcal{B}_{0} < \mathcal{C}^{\perp}$, then $d_{i}(\mathcal{B}_{0}) \geq d_{i}(\mathcal{C}^{\perp})$, which implies that $g_k^{\perp} \geq g_k(\mathcal{B}_{0})$.  We thus obtain: 
\begin{eqnarray} \label{eq:GHW_subcode_upperbound}
 d_{\min}(\mathcal{C}) & \leq & n + 1 - g_k(\mathcal{B}_{0}).
\end{eqnarray}

\section{Approach and Results} \label{sec:approach} 

Our focus in this paper, is on codes with all-symbol locality for the two-erasure case. 
\vspace{0.1in}
\begin{defn} \label{defn:locality}
A code $\mathcal{C}$ will be said to be \textit{locally $2$-reconstructible with locality $r$},  if for any pair of code-symbol erasures, there exists a sequence of two local (and linear) parity-check computations that can be used to recover the erased symbols. By a local parity-check computation, we mean recovery via a parity whose support covers the coordinate being recovered and involves at most $r$ other code symbols. 
\end{defn}  
\vspace{0.1in}
Note that under the above definition, it is permissible that the symbol recovered by the first local parity belong to the set of $r$ symbols accessed by the second local parity. The families of all-symbol locality codes constructed in  ~\cite{PraKamLalKum}, \cite{JuaHolOgg}, \cite{WanZha} for the case $\delta=3$ may all be regarded as examples of locally $2$-reconstructible codes.  In the sequel, we will refer to a locally $2$-reconstructible code with locality $r$ simply as a locally reconstructible code.  Our principal results are an upper bound on the minimum distance of locally reconstructible codes and optimal constructions for a large class of code parameters. The steps involved in the derivation of the upper bound on $d_{\min}$ are outlined below.

Given a locally reconstructible code $\mathcal{C}$, let $\mathcal{B}_{0}$ denote the subcode 
of the dual code $\mathcal{C}^{\perp}$, spanned by all codewords ${\bf c} \in \mathcal{C}^{\perp}$ of Hamming weight less than or equal to $r+1$, i.e.,
\begin{eqnarray} \label{eq:C0_local}
\mathcal{B}_{0} & = & \text{span}\left({\bf c} \in \mathcal{C}^{\perp}, |\text{supp}({\bf c})| \leq r+1 \right).
\end{eqnarray}

\noindent \textit{a) Step $1$}: We first establish that the dimension $b$ of $\mathcal{B}_{0}$ satisfies the lower bound $b \  \geq \ \frac{2n}{r+2}$.

\vspace{0.1in}

\noindent \textit{b) Step $2$}: Next, we observe from \eqref{eq:GHW_subcode_upperbound} that the minimum distance of the code ${\cal C}$ satisfies $d_{\min}({\cal C}) \ \leq \ n+1-g_k(\mathcal{B}_0)$.


\vspace{0.1in}

\noindent \textit{c) Step $3$}: We then obtain a lower bound on the $k^{\text{th}}$ gap of $\mathcal{B}_0$ of the form $g_k(\mathcal{B}_0) \geq \gamma_k$, leading to the desired upper bound $d_{\min}(\mathcal{C}) \ \leq \ n+1- \gamma_k$. 
 This step makes use of the lower bound on the dimension $b$ of $\mathcal{B}_0$, derived in Step 1. 

\vspace{0.1in}

The same sequence of steps is also applied in Section~\ref{sec:all_symbol_single} to the case of codes with all-symbol locality for the single-erasure case. This results in a new bound on $d_{\min}$ for this class of codes, tighter in general than that given by \eqref{eq:gopalan_bound}. 

\vspace{-0.15in}

\subsection{Optimal Constructions} We provide code constructions that are optimal with respect to the bound on $d_{\min}$ given above in Step 3 whenever the block length $n$ is of the form $n = \frac{(r+\beta)(r+2)}{2}, 1 \leq \beta \leq r$, with $\beta|r$.  The steps involved are described below.

\vspace{0.1in}

\noindent \textit{a) Step $1$}: We begin by constructing a code $\mathcal{B}_{0}$ such that $(i)$ the code formed by the null space of ${\cal B}_0$ possesses the locally reconstructible property, $(ii)$ $\dim (\mathcal{B}_{0}) = b = \frac{2n}{r+2}$, and $(iii)$ the lower bound on the $k^{\text{th}}$ gap is also achieved, i.e., $g_k(\mathcal{B}_0) = \gamma_k.$ Our construction of $\mathcal{B}_{0}$ is based on Tur\'an graphs~\cite{Die},  depends only on code length $n$ and locality parameter $r$, and is independent of the dimension $k$ of the desired code ${\cal C}$. 

\vspace{0.1in}

\noindent \textit{b) Step $2$}: Given the code $\mathcal{B}_{0}$, it turns out that it is always possible to find 
an $[n, k]$ code $\mathcal{C}$ such that ${\cal B}_0$ is a subcode of ${\cal C}^{\perp}$, $g_k(\mathcal{C}^{\perp}) = g_k(\mathcal{B}_0)$ and this code $\mathcal{C}$ is then the desired locally reconstructible code.  It has the best possible minimum distance given by $d_{\min}(\mathcal{C}) = n + 1 - \gamma_k$. 

This proof is an instance of a more general result that is important in its own right and which can potentially be applied in other situations as well.  It combines the notion of a $k$-core introduced in \cite{GopHuaSimYek} with the GHW structure of a code to enable construction of the best possible code of a given dimension when the code is linearly constrained.  This is discussed in more detail in the next section. 

\section{$k$-cores and Connection with GHWs}\label{sec:kcore_ghw}

\begin{defn}[\cite{GopHuaSimYek}] \label{def:k-core}
Given a linear code $\mathcal{B}_{0}$, a set $S \subseteq [n], |S| = \ell$ is termed an $\ell$-core of $\mathcal{B}_{0}$ if
$\text{supp}({\bf c}) \nsubseteq S $, $\forall {\bf c} \in \mathcal{B}_{0}$.
\end{defn}

\vspace{0.1in}

The above definition is equivalent to saying that $\text{rank}\left(G'|_S\right) = \ell$, for any $S$ which is an $\ell$-core of $\mathcal{B}_{0}$, where $G'$ denotes a generator matrix of $\mathcal{B}_{0}^{\perp}$. The lemma below was used in \cite{GopHuaSimYek} to show the existence of all-symbol locality codes when $(r+1)|n$, and will also prove very useful here. 

\vspace{0.1in}

\begin{lem}[\cite{GopHuaSimYek}]\label{lem:k-core_existence}
Let $\mathcal{B}_{0}$ denote an $[n, t]$ code over $\mathbb{F}_q$. Then for any $k$ such that $k \leq n - t$, there
exists an $[n, k]$ code $\mathcal{C}$ over $\mathbb{F}_q$ such that
\begin{enumerate}[(a)]
 \item $\mathcal{B}_{0} < \mathcal{C}^{\perp}$, and
 \item any $S$ which is a $k$-core of $\mathcal{B}_{0}$ is also a $k$-core of $\mathcal{C}^{\perp}$,
\end{enumerate}
whenever $q > kn^k$.
\end{lem}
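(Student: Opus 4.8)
The plan is to build the desired code $\mathcal{C}$ by a dimension-counting / probabilistic argument over a large field, realizing $\mathcal{C}^{\perp}$ as an extension of $\mathcal{B}_{0}$ by $n-t-k$ additional generators chosen generically. Fix a generator matrix of $\mathcal{B}_{0}$, i.e.\ a $t \times n$ matrix $G_{0}$ of rank $t$. I would take $\mathcal{C}^{\perp}$ to be the row space of the $(n-k) \times n$ matrix $H = \left[\begin{smallmatrix} G_{0} \\ G_{1}\end{smallmatrix}\right]$, where $G_{1}$ is an $(n-t-k) \times n$ matrix whose entries are to be specified; then automatically $\mathcal{B}_{0} < \mathcal{C}^{\perp}$ provided $H$ has full rank $n-k$, and $\mathcal{C}$ is defined as the null space of $H$, an $[n,k]$ code. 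So conditions (a) and the dimension requirement reduce to: choose $G_{1}$ so that (i) $\mathrm{rank}(H) = n-k$, and (ii) for every set $S \subseteq [n]$ with $|S| = k$ that is a $k$-core of $\mathcal{B}_{0}$, $S$ is also a $k$-core of $\mathcal{C}^{\perp}$.

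The key step is to reformulate (ii) in terms of matrix ranks. Recall from the remark after Definition~\ref{def:k-core} that $S$ (with $|S| = k$) being a $k$-core of $\mathcal{B}_{0}$ is equivalent to $\mathrm{rank}(G'|_{S}) = k$, where $G'$ generates $\mathcal{B}_{0}^{\perp}$; equivalently, in terms of the parity-check description, $S$ is a $k$-core of $\mathcal{C}^{\perp}$ iff the complementary columns $[n]\setminus S$ of $H$ have rank $n-k$, i.e.\ iff the $(n-k)\times(n-k)$ submatrix $H|_{[n]\setminus S}$ is nonsingular (this uses $|[n]\setminus S| = n-k$). So I want to choose $G_{1}$ such that, simultaneously for the (at most $\binom{n}{k}$) sets $S$ that are $k$-cores of $\mathcal{B}_{0}$, the determinant $\det\!\big(H|_{[n]\setminus S}\big)$ is nonzero, and also $\mathrm{rank}(H) = n-k$. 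The standard move is to treat the entries of $G_{1}$ as indeterminates and show that each of these determinants is a \emph{nonzero} polynomial in those indeterminates; then the product of all of them is a nonzero polynomial of bounded degree, and the Schwartz–Zippel lemma guarantees a nonzero evaluation over $\mathbb{F}_{q}$ once $q$ exceeds that degree, which is where the hypothesis $q > k n^{k}$ comes in (each determinant has degree at most $n-k < n$ in the $G_{1}$-entries, and there are at most $\binom{n}{k} < n^{k}$ cores, giving total degree $< n \cdot n^{k}$; the factor $k$ absorbs the rank-$(n-k)$ condition and any slack).

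The main obstacle — the only genuinely non-routine point — is showing that for each $k$-core $S$ of $\mathcal{B}_{0}$ the polynomial $\det\!\big(H|_{[n]\setminus S}\big)$ is not identically zero. Since $S$ is a $k$-core of $\mathcal{B}_{0}$ we know $\mathrm{rank}(G_{0}|_{[n]\setminus S}) = \min(t, n-k)$; in fact a $k$-core of the $[n,t]$ code $\mathcal{B}_{0}$ forces $\mathrm{rank}(G_{0}|_{[n]\setminus S}) = t$ (because the rows of $G_{0}|_{S}$ cannot span all of $\mathbb{F}_{q}^{t}$ — that is precisely the $k$-core condition — hence the columns indexed by $[n]\setminus S$ must have rank $t$). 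Thus $G_{0}|_{[n]\setminus S}$ has $t$ independent rows; I then argue that one can extend these by $n-t-k$ rows of an indeterminate $G_{1}|_{[n]\setminus S}$ to a nonsingular $(n-k)\times(n-k)$ matrix — concretely, exhibit one integer (or symbolic) assignment to $G_{1}$ making the extended square matrix invertible, e.g.\ by completing a basis, which witnesses that the determinant polynomial is nonzero. With that lemma in hand, the Schwartz–Zippel union bound finishes both (a) and (b), and a similar (easier) witness handles $\mathrm{rank}(H)=n-k$. I would relegate the bookkeeping of the degree bound and the exact comparison with $k n^{k}$ to the appendix.
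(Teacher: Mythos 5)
Your overall strategy is sound and is a genuinely different route from the original argument in \cite{GopHuaSimYek} (which this paper only cites, not reproves): there, the generator matrix of $\mathcal{C}$ is built row by row inside the null space of $\mathcal{B}_{0}$, with a counting/union-bound over the $k$-cores at each of the $k$ steps, and that stepwise count is exactly what produces a threshold of the form $q > kn^{k}$; you instead work on the dual side, extending $G_{0}$ by generic rows and reducing everything to non-vanishing of the determinants $\det\bigl(H|_{[n]\setminus S}\bigr)$. Your reformulation is correct ($S$ with $|S|=k$ is a $k$-core of $\mathcal{C}^{\perp}$ iff $H|_{[n]\setminus S}$ is nonsingular, once $\mathrm{rank}(H)=n-k$), and so is the key non-vanishing claim, although your stated reason is garbled: the rows of $G_{0}|_{S}$ live in $\mathbb{F}_{q}^{k}$, and ``not spanning $\mathbb{F}_{q}^{t}$'' is not the core condition. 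The correct one-line argument is that a nonzero $x$ with $x^{T}G_{0}|_{[n]\setminus S}=0$ would exhibit a nonzero codeword of $\mathcal{B}_{0}$ supported inside $S$, contradicting the $k$-core property; hence $\mathrm{rank}\bigl(G_{0}|_{[n]\setminus S}\bigr)=t$, and basis completion (possible since $t \leq n-k$) gives a witness showing each determinant is a nonzero polynomial in the entries of $G_{1}$.

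The genuine gap is the field-size bookkeeping, which is part of the statement being proved. You multiply up to $\binom{n}{k}$ determinants, each of degree up to $n-t-k$ in the $G_{1}$-entries, so the total degree can be as large as $(n-t-k)\binom{n}{k}$; this in general exceeds $kn^{k}$ (already for $k=1$ it is roughly $n(n-t-1)$ versus $kn^{k}=n$), so a single application of Schwartz--Zippel to the product only yields existence for $q$ on the order of $(n-k)\binom{n}{k}$, not for every $q > kn^{k}$ as the lemma asserts; the claim that ``the factor $k$ absorbs the slack'' is not correct. The fix is to avoid taking the product: choose the rows of $G_{1}$ one at a time, and at step $i$ avoid, for every $k$-core $S$, the set of $v$ with $v|_{[n]\setminus S}$ in the span of the rows of $G_{0}|_{[n]\setminus S}$ and the previously chosen rows restricted to $[n]\setminus S$ (plus one further subspace to enforce $\mathrm{rank}(H)=n-k$). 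Each such bad set is contained in a proper subspace of $\mathbb{F}_{q}^{n}$, hence has at most $q^{n-1}$ elements, so a union bound over the at most $n^{k}$ cores succeeds whenever $q > n^{k}+1$, which lies within the stated threshold for $k \geq 2$ (the case $k=1$ needs a separate two-line check). This is essentially the original greedy mechanism transplanted to your dual-side construction; with that replacement your proof goes through.
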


\vspace{0.1in}

In the following theorem, we obtain an expression for the minimum distance of the code whose existence is guaranteed
by the above lemma.

\vspace{0.1in}

\begin{thm} \label{thm:GHW_k-core_dmin}
Let $\mathcal{B}_{0}$ denote an $[n, t]$ code and let $\mathcal{C}$ denote an $[n, k]$ code, $k \leq n - t$, such that
\begin{enumerate}[(a)]
 \item $\mathcal{B}_{0} < \mathcal{C}^{\perp}$, and
 \item any $S$ which is a $k$-core of $\mathcal{B}_{0}$ is also a $k$-core of $\mathcal{C}^{\perp}$.
\end{enumerate}
The minimum distance of the code $\mathcal{C}$ is given by 
\begin{eqnarray}
 d_{\min}(\mathcal{C}) & = & n + 1 - g_{k}(\mathcal{B}_{0}).
\end{eqnarray}
\end{thm}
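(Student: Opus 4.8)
The plan is to establish the two inequalities $d_{\min}(\mathcal{C}) \le n+1-g_k(\mathcal{B}_0)$ and $d_{\min}(\mathcal{C}) \ge n+1-g_k(\mathcal{B}_0)$ separately. The first direction is essentially free: since $\mathcal{B}_0 < \mathcal{C}^\perp$, inequality \eqref{eq:GHW_subcode_upperbound} from Section~\ref{sec:ghw} gives $d_{\min}(\mathcal{C}) \le n+1-g_k(\mathcal{B}_0)$ directly, using only condition (a). So condition (b), the $k$-core preservation property, must be exactly what is needed to get the matching lower bound.

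For the lower bound, I would translate everything into the language of GHWs and gaps. By Lemma~\ref{lem:GHW_code_dual_relation} applied to $\mathcal{C}$, we have $d_{\min}(\mathcal{C}) = (n+1) - g^\perp_k$, where $g^\perp_k$ is the $k$-th (largest) gap of $\mathcal{C}^\perp$. So it suffices to show $g_k(\mathcal{C}^\perp) \le g_k(\mathcal{B}_0)$; combined with the already-noted reverse inequality $g_k(\mathcal{C}^\perp) \ge g_k(\mathcal{B}_0)$ (from $\mathcal{B}_0 < \mathcal{C}^\perp$ and the monotonicity of GHWs under taking subcodes), this yields equality and hence the theorem. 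The key observation I would use is a characterization of the $m$-th gap of a code in terms of $m$-cores: for a code $\mathcal{D}$ of dimension $n-\dim(\mathcal{D}^\perp)$, an integer $g$ is a gap of $\mathcal{D}$ iff... more usefully, I would characterize $g_k(\mathcal{D})$ as (roughly) $n$ minus the size of the largest $k$-core of $\mathcal{D}$, or equivalently relate the existence of a $k$-core of a given size to the gap structure. Concretely: $S$ is a $k$-core of $\mathcal{D}$ of size $\ell$ means $\mathrm{rank}(G'|_S) = \ell$ where $G'$ generates $\mathcal{D}^\perp$; the largest $\ell$ for which a $k$-core of size $\ell$ exists controls $g_k$. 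Using condition (b), every $k$-core of $\mathcal{B}_0$ is a $k$-core of $\mathcal{C}^\perp$, so the largest $k$-core of $\mathcal{C}^\perp$ is at least as large as that of $\mathcal{B}_0$, which forces $g_k(\mathcal{C}^\perp) \le g_k(\mathcal{B}_0)$.

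The main obstacle — and the step I would spend the most care on — is pinning down the precise combinatorial dictionary between gaps $g_k$ and $k$-cores, since the definitions in the excerpt are stated asymmetrically (cores are defined via $\mathcal{B}_0$ but phrased through $\mathcal{B}_0^\perp$, while gaps are defined via the GHW sequence of the code itself). I expect the right statement is: $g_k(\mathcal{D})$ equals $n$ minus the maximum cardinality of a $k$-core of $\mathcal{D}$ (when such a core exists), which I would prove by induction on the GHW/gap structure using \eqref{eq:ordering_GHW}, or by directly exhibiting, for a maximal $k$-core $S$, a $k$-dimensional subcode of $\mathcal{D}^\perp$ of support $[n]\setminus$(complement of $S$) realizing the $k$-th gap. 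Once this dictionary is in hand, the proof closes in a line: condition (b) gives $\max\{|S| : S \text{ a } k\text{-core of } \mathcal{B}_0\} \le \max\{|S| : S \text{ a } k\text{-core of } \mathcal{C}^\perp\}$, hence $g_k(\mathcal{C}^\perp) \le g_k(\mathcal{B}_0)$, hence $d_{\min}(\mathcal{C}) \ge n+1-g_k(\mathcal{B}_0)$, matching the upper bound.
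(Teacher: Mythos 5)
Your skeleton is the same as the paper's (condition (a) gives $d_{\min}(\mathcal{C}) \le n+1-g_k(\mathcal{B}_0)$ via \eqref{eq:GHW_subcode_upperbound}; condition (b) must supply the matching lower bound), but the step you defer --- the ``dictionary'' between gaps and cores --- is exactly where the content of the theorem lies, and the version you propose is not correct. By Definition~\ref{def:k-core} a $k$-core has cardinality exactly $k$, so ``the size of the largest $k$-core'' is just $k$; and if you instead mean cores of arbitrary size, the maximum cardinality of a core of a code $\mathcal{D}$ is always $n-\dim(\mathcal{D})$ (a core of size $\ell$ is precisely a set on which a generator matrix of $\mathcal{D}^{\perp}$ has rank $\ell$), which is independent of $k$ and cannot equal $n-g_k(\mathcal{D})$. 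Consequently your closing step, ``condition (b) gives $\max\{|S|:\ S\ \text{a } k\text{-core of }\mathcal{B}_0\}\le\max\{|S|:\ S\ \text{a } k\text{-core of }\mathcal{C}^{\perp}\}$, hence $g_k(\mathcal{C}^{\perp})\le g_k(\mathcal{B}_0)$,'' is vacuous as stated and does not force the gap inequality: the lower bound on $d_{\min}$ requires that \emph{every} coordinate set of size $g_k(\mathcal{B}_0)$ support full column rank of a generator matrix of $\mathcal{C}$, so it is not enough to transfer a single (maximal) core from $\mathcal{B}_0$ to $\mathcal{C}^{\perp}$.

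The statement you actually need is universal over coordinate sets: $g_k(\mathcal{D})\le\gamma$ if and only if every $\gamma$-subset of $[n]$ contains a $k$-core of $\mathcal{D}$ (equivalently, the shortening of $\mathcal{D}$ to any $\gamma$-set has dimension at most $\gamma-k$). Granting this, the proof closes as you intend: any $S$ with $|S|=g_k(\mathcal{B}_0)$ contains a $k$-core of $\mathcal{B}_0$, condition (b) makes it a $k$-core of $\mathcal{C}^{\perp}$, hence $\mathrm{rank}(G|_S)=k$ for every such $S$ and $d_{\min}(\mathcal{C})\ge n+1-g_k(\mathcal{B}_0)$. The nontrivial half --- that every set of size $g_k(\mathcal{B}_0)$ contains a $k$-core of $\mathcal{B}_0$ --- is exactly the claim the paper proves, by shortening: the code $\mathcal{B}_0$ shortened to $S$ has dimension at most $g_k-k$, since a larger dimension would force $d_{g_k-k+1}(\mathcal{B}_0)\le g_k$, contradicting the fact that exactly $g_k-k$ generalized Hamming weights of $\mathcal{B}_0$ are smaller than $g_k$; a systematic generator matrix $[I_{\rho}\,|\,P]$ of the shortened code then exhibits at least $k$ coordinates of $S$ on which no nonzero codeword of $\mathcal{B}_0$ is supported, i.e., a $k$-core inside $S$. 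This shortening-plus-counting argument is the missing idea; with it in place of your guessed identity, your outline coincides with the paper's proof.
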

\begin{proof}
 See Appendix \ref{app:GHW_k-core_dmin}.
\end{proof}

\vspace{0.1in}

Note from \eqref{eq:GHW_subcode_upperbound} that the minimum distance of the code $\mathcal{C}$, whenever $\mathcal{B}_{0} <
\mathcal{C}^{\perp}$, cannot be any larger than $n + 1 - g_{k}(\mathcal{B}_{0})$. In addition to showing that the $k^{\text{th}}$ gap of $\mathcal{C}^{\perp}$ is same as that of $\mathcal{B}_{0}$, it is possible to identify all the GHWs of $\mathcal{C}^{\perp}$ in terms of the GHWs of $\mathcal{B}_{0}$. This is stated in the following theorem.

\vspace{0.1in}

\begin{thm} \label{thm:GHW_k-core}
Let $\mathcal{B}_{0}$ denote an $[n, t]$ code and let $\mathcal{C}$ denote an $[n, k]$ code, $k \leq n - t$, such that
\begin{enumerate}[(a)]
 \item $\mathcal{B}_{0} < \mathcal{C}^{\perp}$, and
 \item any $S$ which is a $k$-core of $\mathcal{B}_{0}$ is also a $k$-core of $\mathcal{C}^{\perp}$.
\end{enumerate}
The generalized Hamming weights of $\mathcal{C}^{\perp}$ are given by
\begin{eqnarray} \label{eq:GHW_completion_isit}
d_i(\mathcal{C}^{\perp}) & = & \left\{ \begin{array}{c} d_i(\mathcal{B}_{0}), \ 1 \leq i \leq g_k(\mathcal{B}_{0}) - k \\
i + k, \ g_k(\mathcal{B}_{0}) - k +1 \leq i \leq n - k \end{array}. \right.
\end{eqnarray}
\end{thm}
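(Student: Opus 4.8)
The plan is to prove the two cases of \eqref{eq:GHW_completion_isit} separately, and the key observation is that assumption (b)---every $k$-core of $\mathcal{B}_{0}$ is a $k$-core of $\mathcal{C}^{\perp}$---forces the $k$-cores of $\mathcal{B}_{0}$ and $\mathcal{C}^{\perp}$ to coincide. Indeed, since $\mathcal{B}_{0} < \mathcal{C}^{\perp}$, any codeword of $\mathcal{B}_{0}$ is a codeword of $\mathcal{C}^{\perp}$, so a set $S$ that fails to be a $k$-core of $\mathcal{B}_{0}$ (i.e.\ contains the support of some ${\bf c}\in\mathcal{B}_{0}$) automatically fails to be a $k$-core of $\mathcal{C}^{\perp}$; combined with (b) this gives equality of the two families of $k$-cores. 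I would then translate the $k$-core condition into a statement about generalized Hamming weights. The cleanest route is through the dual characterization noted after Definition~\ref{def:k-core}: $S$ is an $\ell$-core of a code $\mathcal{B}$ exactly when $\mathrm{rank}(G'|_S)=\ell$ where $G'$ generates $\mathcal{B}^{\perp}$; equivalently, in terms of gaps, one shows that $g_k(\mathcal{B}_{0})$ is precisely the size of the \emph{smallest} set that is \emph{not} a $k$-core of $\mathcal{B}_{0}$ minus one---more precisely, a set of size $m$ containing no $k$-core means every size-$m$ subset supports a codeword, which connects to $d_i$ of the code. I expect to lean on Lemma~\ref{lem:GHW_code_dual_relation} and the dictionary between GHWs and gaps developed in Section~\ref{sec:ghw}.

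For the lower range $1\le i\le g_k(\mathcal{B}_{0})-k$, I would argue $d_i(\mathcal{C}^{\perp})=d_i(\mathcal{B}_{0})$ by two inequalities. The direction $d_i(\mathcal{C}^{\perp})\le d_i(\mathcal{B}_{0})$ is immediate from $\mathcal{B}_{0}<\mathcal{C}^{\perp}$ (any $i$-dimensional subcode of $\mathcal{B}_{0}$ is one of $\mathcal{C}^{\perp}$). For the reverse, suppose some $i$-dimensional subcode $\mathcal{D}<\mathcal{C}^{\perp}$ had support $S$ with $|S|<d_i(\mathcal{B}_{0})$; I would show $S$ (or a suitable superset of size $<g_k(\mathcal{B}_{0})$) is a $k$-core of $\mathcal{B}_{0}$ but not of $\mathcal{C}^{\perp}$, contradicting the equality of $k$-core families. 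The constraint $i\le g_k(\mathcal{B}_{0})-k$ is exactly what guarantees there is enough room: $d_i(\mathcal{B}_{0})$ is small enough (it is at most $g_k(\mathcal{B}_{0})$, by the gap structure, since $g_k(\mathcal{B}_{0})$ is a gap and the $d_i(\mathcal{B}_{0})$ below it are the first $g_k(\mathcal{B}_{0})-k$ of them... this is the place where the indexing must be checked carefully) that one can pad $S$ up to a $k$-core. For the upper range $g_k(\mathcal{B}_{0})-k+1\le i\le n-k$, I would show $d_i(\mathcal{C}^{\perp})=i+k$: the lower bound $d_i(\mathcal{C}^{\perp})\ge i+k$ is just the Singleton-type bound combined with $\dim\mathcal{C}^{\perp}=n-k$ and strict monotonicity \eqref{eq:ordering_GHW} once we know $d_{g_k(\mathcal{B}_{0})-k}(\mathcal{C}^{\perp})=d_{g_k(\mathcal{B}_{0})-k}(\mathcal{B}_{0})=g_k(\mathcal{B}_{0})$ from the first case (so the GHWs must climb by at least $1$ each step and have already reached $g_k(\mathcal{B}_{0})$); for the matching upper bound I would exhibit, for each such $i$, an $i$-dimensional subcode of $\mathcal{C}^{\perp}$ of support size exactly $i+k$, using that the complement of any $k$-core relationship lets us find coordinates freely---here Theorem~\ref{thm:GHW_k-core_dmin} already hands us $d_{\min}(\mathcal{C})=n+1-g_k(\mathcal{B}_{0})$, equivalently (via Lemma~\ref{lem:GHW_code_dual_relation}) $g_{n-k}(\mathcal{C}^{\perp})$-type information, which pins down the gaps of $\mathcal{C}^{\perp}$ above $g_k(\mathcal{B}_{0})$: there are none, i.e.\ every integer in $[g_k(\mathcal{B}_{0})+1,n]$ is a GHW of $\mathcal{C}^{\perp}$, and that is precisely the second line of \eqref{eq:GHW_completion_isit}.

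The main obstacle I anticipate is the bookkeeping in the first case: correctly relating the threshold $g_k(\mathcal{B}_{0})-k$ to where the GHWs of $\mathcal{B}_{0}$ and $\mathcal{C}^{\perp}$ are forced to agree, and making the "pad $S$ up to a non-$k$-core of $\mathcal{C}^{\perp}$ that is a $k$-core of $\mathcal{B}_{0}$" argument airtight---this requires knowing that below the support size $g_k(\mathcal{B}_{0})$ the codes look identical, which is really a restatement of what we are proving, so the argument must be set up inductively on $i$ (establish $d_1=d_1$, use it to get a foothold for $d_2$, etc.) or by directly invoking Theorem~\ref{thm:GHW_k-core_dmin} as the anchor and working outward. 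A clean alternative is to phrase everything in terms of gaps: show $g_j(\mathcal{C}^{\perp})=g_j(\mathcal{B}_{0})$ for $1\le j\le k$ (the gaps of $\mathcal{C}^{\perp}$ that lie at or below $g_k(\mathcal{B}_{0})$ are exactly those of $\mathcal{B}_{0}$, since $k$-cores coincide) and that $\mathcal{C}^{\perp}$ has no further gaps (it has exactly $n-(n-k)=k$ of them), which immediately yields \eqref{eq:GHW_completion_isit} after taking complements. I would likely present it this way, as it isolates the one genuinely new input---the $k$-core identity---from the purely combinatorial gap/weight dictionary of Section~\ref{sec:ghw}.
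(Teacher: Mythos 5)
Your outline gets the easy half right and matches the paper's anchor: using Theorem~\ref{thm:GHW_k-core_dmin} together with Lemma~\ref{lem:GHW_code_dual_relation} to conclude $g_k(\mathcal{C}^{\perp})=g_k(\mathcal{B}_{0})$, and then observing that $\mathcal{C}^{\perp}$ has exactly $k$ gaps, so every integer in $[g_k(\mathcal{B}_{0})+1,n]$ is a GHW of $\mathcal{C}^{\perp}$; this is precisely how the paper obtains the second line of \eqref{eq:GHW_completion_isit}. (One slip there: you write $d_{g_k(\mathcal{B}_{0})-k}(\mathcal{C}^{\perp})=d_{g_k(\mathcal{B}_{0})-k}(\mathcal{B}_{0})=g_k(\mathcal{B}_{0})$, which cannot hold since $g_k(\mathcal{B}_{0})$ is a gap of $\mathcal{B}_{0}$; the correct statement is $d_{g_k(\mathcal{B}_{0})-k}(\mathcal{B}_{0})\le g_k(\mathcal{B}_{0})-1$, and the gap-counting phrasing you give afterwards is the right one.) The inequality $d_i(\mathcal{C}^{\perp})\le d_i(\mathcal{B}_{0})$ from $\mathcal{B}_{0}<\mathcal{C}^{\perp}$ is also fine, as is your observation that hypothesis (b) makes the $k$-cores of the two codes coincide.

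The genuine gap is the reverse inequality $d_i(\mathcal{C}^{\perp})\ge d_i(\mathcal{B}_{0})$ for $1\le i\le g_k(\mathcal{B}_{0})-k$, which is the substance of the theorem, and your proposal stops exactly where the work begins: you note that ``pad $S$ up to a $k$-core of $\mathcal{B}_{0}$ that is not a $k$-core of $\mathcal{C}^{\perp}$'' is essentially a restatement of the claim and say it ``must be set up inductively,'' or assert that the low gaps coincide ``since $k$-cores coincide,'' but you supply no mechanism for either. The missing ingredients, which the paper's induction on $i$ uses, are: (i) every $m$-core of $\mathcal{B}_{0}$ with $m<k$ extends to a $k$-core of $\mathcal{B}_{0}$ (rank characterization, using $\dim\mathcal{B}_{0}^{\perp}\ge k$), hence by (b) is an $m$-core of $\mathcal{C}^{\perp}$ --- this is what turns the $k$-core hypothesis into information about smaller supports; (ii) every set of size $g_m(\mathcal{B}_{0})$ contains an $m$-core of $\mathcal{B}_{0}$, proved by shortening $\mathcal{B}_{0}$ to the set and bounding the shortened dimension by the number of GHWs below $g_m$ (the argument in Appendix~\ref{app:GHW_k-core_dmin}); (iii) the counting bound \eqref{eq:k_uppperbound}, which guarantees that the relevant core sizes $m=d_{i+1}(\mathcal{B}_{0})-(i+1)$ (and $d_1(\mathcal{B}_{0})-1$ in the base case) are strictly less than $k$ precisely because $i\le g_k(\mathcal{B}_{0})-k$ --- this is where the range restriction actually enters, which you flag but never verify; and (iv) the shortening argument showing that if an $(i+1)$-dimensional subcode $\mathcal{D}<\mathcal{C}^{\perp}$ had support of size $g_m(\mathcal{B}_{0})$, then no $m$-subset of its support could be an $m$-core of $\mathcal{C}^{\perp}$, contradicting (i)--(ii). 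Note also that the correct move is to shrink a set of size $g_m$ down to an $m$-core (as in (ii)), not to pad a small support up to a $k$-core avoiding all $\mathcal{B}_{0}$-supports, which is not obviously possible and is the point where your sketch, as written, would fail. Without these steps the proposal is a plan rather than a proof.
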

\begin{proof}
 See Appendix \ref{app:GHW_k-core}.
\end{proof}

\vspace{0.1in}

An illustration of \eqref{eq:GHW_completion_isit} is given in Fig. \ref{fig:kcore_poles} with parameters $n=15$, $t=5$ and $k=8$. We see that the largest gap of  $\mathcal{C}^{\perp}$ is same as the $k^{\text{th}}$ gap of $\mathcal{B}_{0}$. Moreover the GHWs of $\mathcal{C}^{\perp}$ which appear to the left of the $k^{\text{th}}$  gap are exactly same as those of $\mathcal{B}_{0}$.

\begin{center}
\begin{figure} [h!]
\begin{center}
\includegraphics[width=4in]{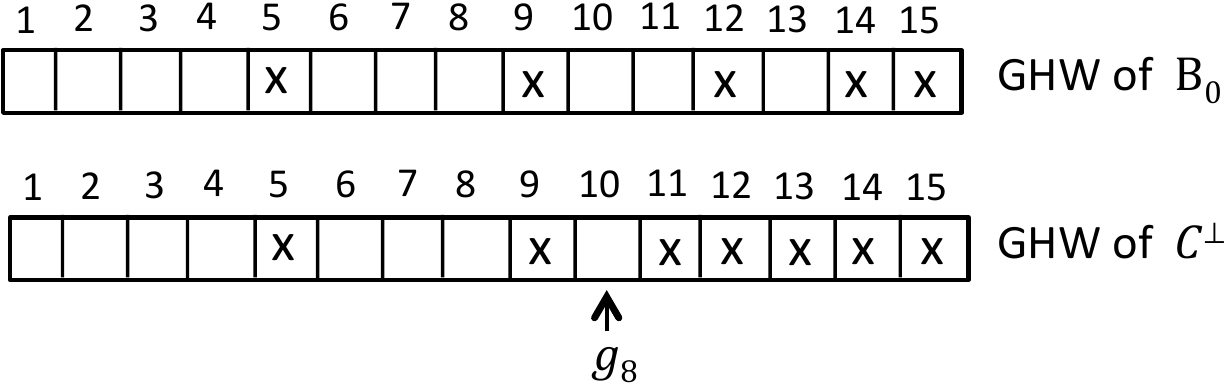}
\end{center}
\caption{Relation between GHWs of $\mathcal{C}^{\perp}$ and  $\mathcal{B}_{0}$, with $n=15$, $t=5$ and $k=8$. GHWs are indicated by an `X', gaps by a blank.}
\label{fig:kcore_poles}
\end{figure}
\end{center}

\section{Minimum Distance Bound for Locally Reconstructible Codes} \label{sec:bounds}

In this section, we will obtain upper bounds on the GHWs of the subcode $\mathcal{B}_{0}$, as defined in \eqref{eq:C0_local}. This in turn will establish a lower bound on the $k^{\text{th}}$ gap $g_k(\mathcal{B}_{0})$ from which we will obtain an upper bound on the minimum distance of $\mathcal{C}$. We begin with a characterization of a locally reconstructible code.

\vspace{0.1in}

\begin{lem} \label{lem:loc_char_A}
Let $\mathcal{A}_i$ denote the collection of all the local parities which cover the code symbol $c_i, 1 \leq i \leq n$. Code $\mathcal{C}$ is locally reconstructible iff $(i)$ $|\mathcal{A}_i|  \ \geq \ 1$ and $(ii)$ $\mathcal{A}_i \ \neq \  \mathcal{A}_j, \ \forall i, j, \ i\neq j$.
\end{lem}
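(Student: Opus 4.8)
The plan is to prove both directions of the "iff" by relating the abstract recovery requirement of Definition~\ref{defn:locality} to the combinatorial data carried by the families $\mathcal{A}_i$. First I would fix notation: a \emph{local parity} is a codeword ${\bf h}\in\mathcal{C}^{\perp}$ with $|\text{supp}({\bf h})|\le r+1$, and for a coordinate $i$ I regard $\mathcal{A}_i$ either as the set of such ${\bf h}$ with $i\in\text{supp}({\bf h})$, or — what is cleaner for the argument — as the set of \emph{supports} $\text{supp}({\bf h})$ of such parities containing $i$. The key elementary observation, which I would state first, is that a single erased symbol $c_j$ with $j\in\text{supp}({\bf h})$ can be recovered from a local parity ${\bf h}$ precisely when $\text{supp}({\bf h})\setminus\{j\}$ contains no other erasure; and that when two symbols $c_i,c_j$ are erased, a \emph{sequence} of two local parities works iff some ${\bf h}_1\in\mathcal{A}_i$ (WLOG) has $\text{supp}({\bf h}_1)\cap\{i,j\}=\{i\}$, after which $c_j$ alone is erased and we need some ${\bf h}_2\in\mathcal{A}_j$ — which automatically exists and succeeds, since only one erasure remains. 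So the whole reconstructibility condition reduces to: \emph{for every pair $i\ne j$, either $\mathcal{A}_i$ contains a support avoiding $j$, or $\mathcal{A}_j$ contains a support avoiding $i$.}

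For the "only if" direction I would argue the contrapositive of each of (i) and (ii). If $\mathcal{A}_i=\emptyset$ for some $i$, then erasing $c_i$ together with any other symbol leaves $c_i$ with no local parity at all covering it, so neither step of any two-step sequence can ever recover $c_i$; hence (i) is necessary. If instead $\mathcal{A}_i=\mathcal{A}_j$ for some $i\ne j$ (as collections of supports, every support in $\mathcal{A}_i$ contains $i$ and every support in $\mathcal{A}_j$ contains $j$, so $\mathcal{A}_i=\mathcal{A}_j$ forces every such support to contain both $i$ and $j$), then erase exactly $c_i$ and $c_j$: every local parity covering $i$ also covers the still-erased $j$, and symmetrically, so no first local computation can recover either symbol — again no valid sequence exists. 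Thus (ii) is necessary.

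For the "if" direction, assume (i) and (ii) and let $c_i,c_j$ be any erased pair. Since $\mathcal{A}_i\ne\mathcal{A}_j$, pick a support $S$ lying in the symmetric difference; say $S\in\mathcal{A}_i\setminus\mathcal{A}_j$ (the other case is symmetric). Then $i\in S$ but $j\notin S$, so the parity with support $S$ recovers $c_i$ from the $\le r$ symbols in $S\setminus\{i\}$, none of which is erased. After this first computation only $c_j$ remains erased, and by (i) there is some support $T\in\mathcal{A}_j$; since $T$ is the only remaining erasure inside $T$, the corresponding parity recovers $c_j$ from $T\setminus\{j\}$. This is exactly the required sequence of two local parity-check computations, so $\mathcal{C}$ is locally reconstructible.

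The main subtlety — the one place where I'd be careful rather than routine — is the precise reading of the equality $\mathcal{A}_i=\mathcal{A}_j$ and the matching convention for "same local parity." Two parities ${\bf h},{\bf h}'$ with the same support but differing by a nonzero scalar define the same recovery equation, so it is the \emph{set of supports} that matters; one must check that Definition~\ref{defn:locality}'s notion of a "local parity-check computation" is invariant under scaling and that $\mathcal{A}_i$ is meant in this support sense, so that $\mathcal{A}_i\ne\mathcal{A}_j$ genuinely guarantees a support in the symmetric difference. A second small point worth making explicit is that the definition permits the symbol recovered first to lie among the $r$ symbols accessed by the second parity (the remark just after Definition~\ref{defn:locality}); this is automatically respected above because, once $c_i$ is known, $T\setminus\{j\}$ may freely include $i$. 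With these conventions pinned down the argument is short; the rest is bookkeeping.
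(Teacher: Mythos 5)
Your proof is correct and is exactly the ``straightforward'' argument the paper has in mind (which it omits): necessity of $(i)$ and $(ii)$ by exhibiting the stuck pair of erasures, and sufficiency by picking a parity in the symmetric difference $\mathcal{A}_i \setminus \mathcal{A}_j$ or $\mathcal{A}_j \setminus \mathcal{A}_i$ for the first step and any member of the other (nonempty) collection for the second, matching the paper's remark following the lemma about the case $\mathcal{A}_i \subsetneq \mathcal{A}_j$. Note only that your worry about scalar multiples is moot: since every element of $\mathcal{A}_i$ covers $i$ and every element of $\mathcal{A}_j$ covers $j$, any parity in $\mathcal{A}_i \setminus \mathcal{A}_j$ already has $j$ outside its support whether $\mathcal{A}_i$ is read as a set of codewords or of supports.
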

\begin{proof}
 Straightforward.
\end{proof}

\vspace{0.1in}

The parallel recovery of two code symbols $c_i$ and $c_j$ is possible iff $\mathcal{A}_i \nsubseteq \mathcal{A}_j$ and $\mathcal{A}_j \nsubseteq \mathcal{A}_i$.  In the event that $\mathcal{A}_i \subsetneq \mathcal{A}_j$, $c_j$ can be recovered first through a local computation not involving $c_i$ and having recovered $c_j$, $c_i$ can then be recovered.  

\vspace{0.1in}

\begin{thm} \label{thm:dim_local_code}
The dimension of the subcode $\mathcal{B}_{0}$ defined in \eqref{eq:C0_local} is lower bounded by
\begin{eqnarray}
\text{dim}(\mathcal{B}_{0}) &\geq &\frac{2n}{r+2}. 
\end{eqnarray}
\end{thm}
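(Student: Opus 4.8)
The plan is to count incidences between code symbols and the local parities that cover them, using the characterization of Lemma~\ref{lem:loc_char_A}. Let $\mathcal{A}_i$ be the set of local parities (rows of weight $\le r+1$ in some parity-check matrix, or equivalently minimal-support codewords of $\mathcal{C}^{\perp}$ of weight $\le r+1$) covering symbol $c_i$. By Lemma~\ref{lem:loc_char_A}, each $\mathcal{A}_i$ is nonempty and the map $i \mapsto \mathcal{A}_i$ is such that no two coordinates share the \emph{same} collection; more precisely, the local-reconstructibility condition forces, for every pair $i \neq j$, that $\mathcal{A}_i$ and $\mathcal{A}_j$ are not both contained in one another, i.e.\ the sets are distinct and in fact pairwise incomparable is \emph{not} required — only distinctness. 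The key structural fact I would extract is: for each ordered pair $(i,j)$ there is a local parity covering $i$ but not $j$ (this is what lets $c_i$ be recovered once $c_j$ is erased and $c_j$ recovered — actually the recovery order may need care, but the upshot is that the parities covering a given coordinate cannot be "absorbed" by those covering another).

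First I would set up the bipartite incidence structure between the coordinate set $[n]$ and a chosen spanning set of local parities $P_1,\dots,P_m$ for $\mathcal{B}_0$, where WLOG $m = b = \dim(\mathcal{B}_0)$ after discarding dependent parities. Each $P_\ell$ has support size $\le r+1$, so the total number of incidences is $\sum_\ell |\operatorname{supp}(P_\ell)| \le (r+1)b$. Next I would lower-bound the same quantity from the coordinate side. The crucial claim is that $\sum_{i=1}^n |\mathcal{A}_i|$, restricted to the chosen spanning parities, must be at least $2n$: each coordinate $i$ needs $|\mathcal{A}_i| \ge 1$ just by condition (i), and the distinctness/incomparability condition (ii) forces, on average, a second incidence — roughly, if too many coordinates were covered by exactly one parity each, two of them covered by the \emph{same} single parity would have equal $\mathcal{A}_i$, violating (ii). Pinning down exactly how condition (ii) upgrades "$\ge 1$" to "$\ge 2$ on average" — accounting for the fact that a parity of weight $r+1$ can be the unique parity for at most one of its coordinates — is the heart of the argument.

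Carrying this out: a parity $P_\ell$ of weight $w_\ell \le r+1$ can serve as the \emph{sole} covering parity (i.e.\ $\mathcal{A}_i = \{P_\ell\}$) for at most one coordinate $i$ in its support, else two such coordinates would have identical $\mathcal{A}$-sets. So among the $n$ coordinates, at most $b$ have $|\mathcal{A}_i| = 1$, and every other coordinate has $|\mathcal{A}_i| \ge 2$. Hence $\sum_i |\mathcal{A}_i| \ge 2n - b$. Combining with $\sum_i |\mathcal{A}_i| = \sum_\ell w_\ell \le (r+1)b$ gives $2n - b \le (r+1)b$, i.e.\ $2n \le (r+2)b$, which is exactly $b \ge \frac{2n}{r+2}$.

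The main obstacle I anticipate is the reduction to a spanning set of exactly $b$ parities while preserving the "$\le r+1$ weight" bound on every parity and the validity of condition (ii) for the reduced family — one must check that throwing away linearly dependent low-weight parities does not destroy the property that at most one coordinate per parity can be uniquely covered, and that the set of surviving low-weight parities still spans $\mathcal{B}_0$ (which holds by definition of $\mathcal{B}_0$ as the span of \emph{all} weight-$\le r+1$ dual codewords, so a basis of low-weight codewords exists). A secondary subtlety is that $\mathcal{A}_i$ as defined in Lemma~\ref{lem:loc_char_A} ranges over \emph{all} local parities, not just a basis, so one must argue the counting bound is unaffected — or restate the counting directly in terms of a basis, using that $|\mathcal{A}_i| = 1$ for the global family implies $|\mathcal{A}_i \cap \{\text{basis}\}| \le 1$ and that such singletons are still forced to be distinct by (ii). Once that bookkeeping is settled, the inequality chain above closes the proof.
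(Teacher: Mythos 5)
Your argument is correct and is essentially the paper's own proof: fix a basis of $\mathcal{B}_{0}$ consisting of codewords of weight $\leq r+1$, observe that each basis parity can be the \emph{unique} basis cover of at most one coordinate---because if $\ell_1,\ell_2$ were both covered only by the same basis vector, every linear combination of the basis (hence every local parity) would cover $\ell_1$ iff it covers $\ell_2$, giving $\mathcal{A}_{\ell_1}=\mathcal{A}_{\ell_2}$ and contradicting Lemma~\ref{lem:loc_char_A}(ii)---and then double-count coordinate--parity incidences to get $2n-b\leq b(r+1)$, i.e.\ $b\geq \frac{2n}{r+2}$. The subtlety you flag (global $\mathcal{A}_i$ versus incidences with the chosen basis) is resolved in the paper exactly by this basis-linear-combination argument, so there is no gap.
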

\begin{proof}
 See Appendix \ref{app:dim_local_code}.
\end{proof}

\vspace{0.1in}

\begin{cor}\label{cor:rate_bound}
The rate of any code $\mathcal{C}$ which is locally reconstructible is upper bounded by
\begin{eqnarray} \label{eq:rate_bound}
 \frac{k}{n} & \leq & \frac{r}{r+2}.
\end{eqnarray}
\end{cor}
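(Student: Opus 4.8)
The plan is to combine the dimension lower bound on $\mathcal{B}_0$ from Theorem~\ref{thm:dim_local_code} with the trivial observation that $\mathcal{B}_0$, being spanned by codewords of $\mathcal{C}^{\perp}$, is a subcode of $\mathcal{C}^{\perp}$. Since $\dim(\mathcal{C}^{\perp}) = n-k$, any subcode has dimension at most $n-k$, so in particular $\dim(\mathcal{B}_0) \leq n-k$.

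First I would recall that, by definition \eqref{eq:C0_local}, $\mathcal{B}_0 = \text{span}\left({\bf c} \in \mathcal{C}^{\perp}, |\text{supp}({\bf c})| \leq r+1\right)$ is a subspace of $\mathcal{C}^{\perp}$, whence $\dim(\mathcal{B}_0) \leq \dim(\mathcal{C}^{\perp}) = n-k$. Next I would invoke Theorem~\ref{thm:dim_local_code}, which gives $\dim(\mathcal{B}_0) \geq \frac{2n}{r+2}$. Chaining the two inequalities yields $\frac{2n}{r+2} \leq n-k$.

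Finally I would rearrange: $k \leq n - \frac{2n}{r+2} = n\cdot\frac{r+2-2}{r+2} = n\cdot\frac{r}{r+2}$, and dividing by $n$ gives the claimed bound $\frac{k}{n} \leq \frac{r}{r+2}$. There is essentially no obstacle here — the corollary is an immediate consequence of Theorem~\ref{thm:dim_local_code}; the only (routine) point to note is that the locally reconstructible hypothesis guarantees $\mathcal{B}_0$ is nonzero and that its dimension bound applies, so all the work has already been done in proving that theorem.
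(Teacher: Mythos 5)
Your proof is correct and is exactly the argument the paper intends: the corollary is stated as an immediate consequence of Theorem~\ref{thm:dim_local_code}, obtained by noting $\mathcal{B}_0 < \mathcal{C}^{\perp}$ so that $\frac{2n}{r+2} \leq \dim(\mathcal{B}_0) \leq n-k$, which rearranges to $\frac{k}{n} \leq \frac{r}{r+2}$. No discrepancy with the paper's approach.
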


\vspace{0.1in}

The following lemma will be used to establish upper bounds on the GHWs of $\mathcal{B}_{0}$.

\vspace{0.1in}

\begin{lem}\label{lem:min_set_union}
Let $T$ be any set such that $|T| = n \geq r+1$ and let $S_i, \ 1 \leq i \leq b$ be subsets of $T$
such that $(i)$ $\cup_{i=1}^{b}S_i = T$ and $(ii)$ $|S_i| = r+1, \forall i \in \{1, 2, \ldots, b\}$.
Define

\begin{eqnarray} \label{eq:fm}
f_{m} = \min_{\substack{I \subseteq [b] \\ |I|= m}} \left| \cup_{i \in I}S_i \right|, 1 \leq m \leq b.
\end{eqnarray}

Then, $\forall m \in [b]$,  $f_{m} \leq e_m$, where the $\{e_m\}$ are obtained recursively as follows:
\vspace{-0.1in}
\begin{eqnarray}
e_b & = & n, \label{eq:e_def1}\\
e_{m-1} & = & e_{m} - \left\lceil \frac{2e_m}{m}\right\rceil + (r+1),  \ 2 \leq m \leq b. \label{eq:e_def2}
\end{eqnarray}
\end{lem}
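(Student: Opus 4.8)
The plan is to prove the bound $f_m \le e_m$ by downward induction on $m$, starting from $m=b$ and working down to $m=1$. The base case $m=b$ is immediate: since $\cup_{i=1}^b S_i = T$ and $|T|=n$, we have $f_b = n = e_b$. For the inductive step, suppose $f_m \le e_m$ has been established for some $m$ with $2 \le m \le b$; I want to deduce $f_{m-1} \le e_{m-1}$. The key idea is an averaging (double-counting) argument: fix an index set $I$ with $|I|=m$ achieving the minimum in the definition of $f_m$, so $|\cup_{i\in I} S_i| = f_m \le e_m$. Writing $U_I = \cup_{i\in I}S_i$, I will show that \emph{some} element of $I$ can be removed without shrinking the union by too much.

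For each $j \in I$, the set $U_{I\setminus\{j\}}$ misses only those elements of $U_I$ that lie in $S_j$ but in no other $S_i$ with $i \in I\setminus\{j\}$; call this set of ``private'' elements $P_j \subseteq S_j$. Then $|U_{I\setminus\{j\}}| = f_m - |P_j| \ge f_{m-1}$, and in particular $f_{m-1} \le f_m - \max_{j\in I}|P_j|$. So it suffices to lower-bound $\max_{j\in I}|P_j|$. Now the sets $P_j$, $j\in I$, are pairwise disjoint subsets of $U_I$, so $\sum_{j\in I}|P_j| \le |U_I| = f_m$; but that gives an \emph{upper} bound, which is the wrong direction. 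Instead I count elements of $U_I$ covered at most once within $I$: each element of $U_I$ that is ``covered exactly once'' belongs to exactly one $P_j$. Counting incidences, $\sum_{j\in I}|S_j| = (r+1)m$, and each element of $U_I$ is covered at least once, so the number of ``extra'' incidences (beyond one per element) is $(r+1)m - f_m$; hence the number of elements covered \emph{more than} once is at most $(r+1)m - f_m$, so the number of elements covered exactly once — which equals $\sum_{j\in I}|P_j|$ — is at least $f_m - \big((r+1)m - f_m\big) = 2f_m - (r+1)m$. By averaging, $\max_{j\in I}|P_j| \ge \lceil (2f_m - (r+1)m)/m\rceil = \lceil 2f_m/m\rceil - (r+1)$. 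Substituting, $f_{m-1} \le f_m - \lceil 2f_m/m\rceil + (r+1)$.

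The last step is to check that the recursion is monotone in the right sense so that $f_m \le e_m$ propagates: I need the map $x \mapsto x - \lceil 2x/m\rceil + (r+1)$ to be nondecreasing in $x$ (over the relevant integer range), so that $f_m \le e_m$ together with $f_{m-1} \le f_m - \lceil 2f_m/m\rceil + (r+1)$ yields $f_{m-1} \le e_m - \lceil 2e_m/m\rceil + (r+1) = e_{m-1}$. Monotonicity of $x - \lceil 2x/m\rceil$ holds because increasing $x$ by $1$ increases $\lceil 2x/m\rceil$ by at most $2 \le m$ — wait, this needs $m \ge 2$, which is exactly the range in the recursion; for $m=2$ one checks the ceiling increases by at most $1$ per unit step when... actually $\lceil 2x/2\rceil = x$, so $x - \lceil 2x/2\rceil = 0$ is constant, fine. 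For $m \ge 3$, $\lceil 2(x+1)/m\rceil - \lceil 2x/m\rceil \le \lceil 2/m\rceil = 1 \le 1$ when $m\ge 2$, so the difference $x - \lceil 2x/m\rceil$ is nondecreasing. I expect the main obstacle to be this bookkeeping: making the ``exactly once'' counting and the ceiling/monotonicity argument fully rigorous, including verifying that $2f_m - (r+1)m$ could be negative (in which case the bound $\max_j |P_j| \ge \lceil 2f_m/m \rceil - (r+1)$ may be vacuous or even state something $\le 0$, and one should check it still combines correctly — in fact $\max_j|P_j|\ge 0$ always, and the recursion is designed so that $e_{m-1}$ never exceeds $e_m$ inappropriately). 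These are routine once the incidence-counting identity is in place.
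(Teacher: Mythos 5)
Your proposal is correct and takes essentially the same route as the paper's proof: downward induction from $m=b$, double-counting incidences in a minimizing family to get $\sum_{j}|P_j| \ge 2f_m-(r+1)m$ (the paper's $s_i$'s), and deleting the set with the most private elements. The only cosmetic difference is that you apply the ceiling before replacing $f_m$ by $e_m$ and therefore need the (correct) monotonicity check for $x \mapsto x - \left\lceil \frac{2x}{m}\right\rceil$, whereas the paper keeps the bound real-valued, uses $\frac{m-2}{m}f_m \le \frac{m-2}{m}e_m$ for $m \ge 2$, and rounds only at the last step.
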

\begin{proof}
See Appendix \ref{app:min_set_union}.
\end{proof}

\vspace{0.1in}

Note that in Lemma \ref{lem:min_set_union}, since $\cup_{i=1}^{b}S_i = T$, we have that $b \geq \frac{n}{r+1}$ and thus setting $m=b$ in \eqref{eq:e_def2} and dropping the ceiling function, we obtain 
\begin{eqnarray}
e_{b-1} & \leq & (1 - \frac{2}{b})e_b + (r+1) \ = \ (1 - \frac{2}{b})n + (r+1), \nonumber\\
& \hspace{-0.5in} \leq & \hspace{-0.3in} (1 - \frac{2}{b})b(r+1) + (r+1) \ = \ (b-1)(r+1). \label{eq:comparision_bounds_single}
\end{eqnarray}
The arguments in \eqref{eq:comparision_bounds_single} can be iterated further (with $m=b-1$ and so on) and from this it follows that Lemma \ref{lem:min_set_union} indeed implies the obvious bound $f_m \leq m(r+1), 1 \leq m \leq b$. In general the bounds given by Lemma \ref{lem:min_set_union} will be tighter than this and will take into account the fact that the total support has cardinality only $n$.

\vspace{0.1in}

\begin{thm} \label{thm:GHW_upper_bounds_C0}
Let  $\mathcal{C}$ denote an $[n, k]$ locally reconstructible code and let $\mathcal{B}_{0}$ be the subcode as defined in \eqref{eq:C0_local}. Set $b = \left\lceil \frac{2n}{r+2}\right\rceil$. Then the first $b$ GHWs of $\mathcal{B}_{0}$, and hence those of $\mathcal{C}^{\perp}$, are upper bounded by $d_m(\mathcal{B}_{0}) \leq e_m, \ 1 \leq m \leq b$, where $e_m$ is as defined by \eqref{eq:e_def1}
and \eqref{eq:e_def2}. Furthermore, if $\ell$ denotes the unique integer satisfying 
$e_{\ell} < k+\ell < e_{\ell+1}$, then the minimum distance of $\mathcal{C}$ is upper bounded by
\begin{eqnarray}
d_{\min}(\mathcal{C}) & \leq & n + 1 - (k + \ell). \label{eq:d_min_bound_seq}
\end{eqnarray}
\end{thm}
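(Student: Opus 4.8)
The plan is to combine the three building blocks already assembled in the excerpt: (i) the dimension bound $\dim(\mathcal{B}_0) \ge \frac{2n}{r+2}$ from Theorem~\ref{thm:dim_local_code}; (ii) the combinatorial recursion of Lemma~\ref{lem:min_set_union} bounding the minimal size of a union of $(r+1)$-subsets; and (iii) the GHW/gap machinery of Lemma~\ref{lem:GHW_code_dual_relation} together with the $k$-core identity of Theorem~\ref{thm:GHW_k-core_dmin}. First I would set $t = \dim(\mathcal{B}_0)$ and observe that, by Theorem~\ref{thm:dim_local_code}, $t \ge \lceil \tfrac{2n}{r+2}\rceil = b$. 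The generating set of $\mathcal{B}_0$ consists of codewords of the dual of Hamming weight $\le r+1$; by Lemma~\ref{lem:loc_char_A} every coordinate $i \in [n]$ is covered by at least one such codeword, so the supports $\{S_i\}$ of a generating collection of these low-weight codewords satisfy $\cup_i S_i = [n]$ and $|S_i| \le r+1$. (If some support has weight $< r+1$ one can enlarge it to exactly $r+1$; enlarging supports only increases the union sizes $f_m$, so the bounds still apply — I should double check this monotonicity point, but it is routine.)

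The second step is to translate the GHWs $d_m(\mathcal{B}_0)$ into union sizes. For any $m$-dimensional subcode $\mathcal{D} < \mathcal{B}_0$, the support of $\mathcal{D}$ is a union of supports of the low-weight generators spanning it; more carefully, since $\mathcal{B}_0$ is spanned by the $S_i$-supported codewords, any $m$-dimensional subcode is contained in the span of some $m$ of them (this requires a small matroid/rank argument — pick a basis of $\mathcal{D}$ and express each basis vector in terms of the generators, then extract $m$ generators whose supports cover all of $\mathrm{supp}(\mathcal{D})$). Hence $d_m(\mathcal{B}_0) \ge f_m$ is the wrong direction; what we actually want is $d_m(\mathcal{B}_0) \le$ (size of the union of the $m$ cleverly-chosen generators), i.e. $d_m(\mathcal{B}_0) \le f_m$, and then $f_m \le e_m$ by Lemma~\ref{lem:min_set_union}. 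Since $t \ge b$, all of $d_1(\mathcal{B}_0), \ldots, d_b(\mathcal{B}_0)$ are defined, giving $d_m(\mathcal{B}_0) \le e_m$ for $1 \le m \le b$. By the containment $\mathcal{B}_0 < \mathcal{C}^\perp$ and the monotonicity remark $d_i(\mathcal{B}_0) \ge d_i(\mathcal{C}^\perp)$ from the discussion around \eqref{eq:GHW_subcode_upperbound}, these are also upper bounds on $d_m(\mathcal{C}^\perp)$, so $d_m(\mathcal{C}^\perp) \le e_m$ for $1 \le m \le b$.

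The final step is to extract the minimum distance bound from the gap structure. By Lemma~\ref{lem:GHW_code_dual_relation}, $d_{\min}(\mathcal{C}) = (n+1) - g_k^\perp$, where $g_k^\perp$ is the $k$-th gap of $\mathcal{C}^\perp$, equivalently (by \eqref{eq:GHW_subcode_upperbound}) $d_{\min}(\mathcal{C}) \le (n+1) - g_k(\mathcal{B}_0)$. So it suffices to show $g_k(\mathcal{B}_0) \ge k+\ell$, where $\ell$ is the unique index with $e_\ell < k+\ell < e_{\ell+1}$. The $k$-th gap of $\mathcal{B}_0$ is the smallest integer $N$ such that $N$ is not among $d_1(\mathcal{B}_0) < d_2(\mathcal{B}_0) < \cdots$, having exactly $k$ missing values below it; equivalently $g_k(\mathcal{B}_0) = k + \#\{j : d_j(\mathcal{B}_0) < g_k(\mathcal{B}_0)\}$, and one argues that since the first $\ell$ GHWs satisfy $d_j(\mathcal{B}_0) \le e_j < e_\ell < k+\ell$, at least $\ell$ GHW values lie strictly below $k+\ell$, forcing $g_k(\mathcal{B}_0) \ge k+\ell$. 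Plugging into $d_{\min}(\mathcal{C}) \le (n+1) - g_k(\mathcal{B}_0)$ gives \eqref{eq:d_min_bound_seq}. The main obstacle I anticipate is the second step: making rigorous the claim that every $m$-dimensional subcode of $\mathcal{B}_0$ has support contained in the union of $m$ of the low-weight generator supports (so that $d_m(\mathcal{B}_0) \le f_m$), and ensuring the indices align so that the unique $\ell$ in the theorem statement is precisely the one produced by counting GHWs below $k+\ell$; the existence and uniqueness of such $\ell$ (given the strict monotonicity $e_m < e_{m+1}$, which itself needs a brief check from \eqref{eq:e_def2}) must also be verified. The $k$-core step is comparatively clean given Theorem~\ref{thm:GHW_k-core_dmin}, but one should note that the theorem statement only claims the bound, not its achievability, so we do not need Lemma~\ref{lem:k-core_existence} here at all — only the easy inequality \eqref{eq:GHW_subcode_upperbound}.
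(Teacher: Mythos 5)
Your route is the same as the paper's: take a basis of $\mathcal{B}_0$ consisting of dual codewords of weight at most $r+1$ (at least $\lceil 2n/(r+2)\rceil$ of them, by Theorem~\ref{thm:dim_local_code}), pad their supports to size exactly $r+1$, apply Lemma~\ref{lem:min_set_union} to upper bound the first $b$ GHWs of $\mathcal{B}_0$, count gaps to get $g_k(\mathcal{B}_0)\ge k+\ell$, and conclude via \eqref{eq:GHW_subcode_upperbound}; as you correctly observe at the end, neither Lemma~\ref{lem:k-core_existence} nor Theorem~\ref{thm:GHW_k-core_dmin} is needed here, and the paper's proof indeed uses none of them. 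One clarification, since you flag it as your ``main obstacle'': the step $d_m(\mathcal{B}_0)\le f_m$ does \emph{not} require the claim that every $m$-dimensional subcode of $\mathcal{B}_0$ has support contained in a union of $m$ generator supports --- that claim is false in general and is only relevant to the achievability direction (Lemma~\ref{lem:GHW_achievability_by_subsets}), which this theorem does not assert. Because $d_m(\mathcal{B}_0)$ is a \emph{minimum} over $m$-dimensional subcodes, it suffices to exhibit a single subcode of small support, namely the span of the $m$ basis codewords whose supports attain $f_m$: being basis vectors they are linearly independent, so this span has dimension exactly $m$ and support of cardinality at most $f_m\le e_m$. This is also why you should work with a basis of low-weight codewords (as the paper does) rather than an arbitrary generating collection, so that any $m$ chosen vectors span exactly $m$ dimensions; with that adjustment your argument, including the gap-counting step ($d_1<\dots<d_\ell<k+\ell$ forces at most $k-1$ gaps below $k+\ell$, hence $g_k(\mathcal{B}_0)\ge k+\ell$), matches the paper's and is, if anything, spelled out more explicitly than the paper's parenthetical justification.
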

\begin{proof}
Consider a basis of $\mathcal{B}_{0}$ which are composed only of codewords of Hamming weight less than or equal to $r+1$.
Let $\{S_i, i = 1, \ldots, b\}$ denote their supports, where $b \geq \frac{2n}{r+2}$. The bounds on the GHWs of $\mathcal{B}_0$ now follows directly upon applying Lemma \ref{lem:min_set_union} to the sets $\{S_i, i = 1, \ldots, b\}$. (Note that if any set $S_i$ has cardinality less than $r+1$, one can simply substitute $S_i$ with any set $S_i'$ such that $|S_i'| = r+1, S_i \subseteq S_i'$ and then apply Lemma \ref{lem:min_set_union}.). 

Given the bounds on the GHWs of $\mathcal{B}_0$, the $k^{\text{th}}$ gap of $\mathcal{B}_{0}$ is lower bounded by  $g_k(\mathcal{B}_{0}) \geq k + \ell$, where $\ell$ denotes the unique integer such that $e_{\ell} < k+\ell < e_{\ell+1}$ (to see this, assume that first $b$ GHWs are given exactly by the sequence $\{e_m, m = 1, \ldots, b\}$ and using this, identify the $k$th gap). The bound on $d_{\min}$ finally follows from  \eqref{eq:GHW_subcode_upperbound}.
\end{proof}

\vspace{0.1in}

A code $\mathcal{C}$ will be called an \textit{optimal locally reconstructible code} if it achieves the bound in
\eqref{eq:d_min_bound_seq} with equality.

\section{Optimal Locally Reconstructible Codes} \label{sec:optimal_codes}

In this section, we will describe a construction for optimal locally reconstructible codes for the case when the length of the code takes on the form
\begin{eqnarray}
n & = & \frac{(r+\beta)(r+2)}{2},
\end{eqnarray}
with $1 \leq \beta \leq r$ and $\beta|r$. The only restriction on the dimension $k$ is the necessary rate restriction given by Corollary \ref{cor:rate_bound}, i.e., $k \leq \frac{rn}{r+2}$.  As described in Section \ref{sec:intro}, our approach to optimal code construction will involve first constructing a code $\mathcal{B}_{0}$ which depends only on $n, r$ and is independent of $k$. The construction of  $\mathcal{B}_{0}$ will be based on Tur\'an graphs and will be such that 
\begin{enumerate}[(a)]
\item $\mathcal{B}_{0}^{\perp}$ is  locally reconstructible,
\item $\text{dim}(\mathcal{B}_{0}) = b = \frac{2n}{r+2} = r + \beta$, and
\item all the $b$ GHWs of $\mathcal{B}_{0}$ achieve the upper bounds given by Theorem
\ref{thm:GHW_upper_bounds_C0}.
\end{enumerate}
Once we have the code $\mathcal{B}_{0}$, the desired $[n,k]$ code is simply the code $\mathcal{C}$, whose existence is guaranteed by Lemma \ref{lem:k-core_existence}. It is clear, based on the discussion in Section \ref{sec:bounds} and from Theorem \ref{thm:GHW_k-core} that this code $\mathcal{C}$ will be an optimal locally reconstructible code.

\subsection{Construction of $\mathcal{B}_{0}$ Using Tur\'an Graphs}

Consider a graph with $b = \frac{2n}{r+2} = r + \beta$ vertices.  We partition the vertices into $x = \frac{r+\beta}{\beta}$
partitions, each partition containing $\beta$ vertices.  We next place exactly one edge between any two vertices belonging to two
distinct partitions. The resulting graph is known as a Tur\'an graph on $b$ vertices with $x$ vertex partitions. The number of edges in this graph is $\frac{x(x-1)\beta^2}{2} = n - b$ and each vertex is connected to exactly $(x-1)\beta = r$ other vertices. Let the vertices be labelled from $1$ to $b$ and the edges be labelled from $b + 1$ to $n$, without paying attention to order.

To convert the graph into a code, we proceed as follows. Associate a local parity with each of the $b$ vertices, let parity
$\underline{p}_i$ be associated with vertex $i, \ 1 \leq i \leq b$. Let $\{i_1, i_2, \ldots, i_r\}$ denote all the edges
which are incident up on vertex $i$. Then, the support $S_i \subseteq [n]$ of the local parity $\underline{p}_i$ is set at 
\begin{eqnarray} \label{eq:supports_turan}
 S_i & = & \{i, i_1, i_2, \ldots, i_r\}
\end{eqnarray}
and the codeword ${\bf c}_i$ corresponding to $\underline{p}_i$ is identified as the all-$1$ vector in these $r+1$ coordinates
(with zeros in the remaining $n-(r+1)$ coordinates). Set $\mathcal{B}_{0} = \text{span}({\bf c}_i, \ 1 \leq i
\leq b)$. It is easily verified that the code $\mathcal{B}_{0}^{\perp}$ is locally reconstructible and that its dual $\mathcal{B}_{0}$ has dimension $b = \frac{2n}{r+2} = r + \beta$. Before proceeding to evaluate the GHWs of the code $\mathcal{B}_{0}$ and proving their optimality w.r.t. Theorem \ref{thm:GHW_upper_bounds_C0}, we first illustrate the construction using two examples.

\vspace{0.1in}

\begin{example}\label{ex:beta_1}
Consider the parameters $r=3$ and $\beta = 1$, which implies that the length $n = 10$. When $\beta = 1$, note that the
number of partitions $x = r + 1 = 4$, and each partition has just one vertex. Thus the total number of vertices $b = r+1 = 4$
and the graph is simply a completely connected graph on $b=4$ vertices. The generator matrix of the code $\mathcal{B}_{0}$
(upto permutation of columns) in this case is given by
\begin{eqnarray}
 H_0 & = & \left[ \begin{array}{cccccccccc} 1& 1& 1& 1& & & & & & \\
                   1& & & & 1& 1& 1& & & \\
		   & 1& & & 1& & & 1&1 & \\
		   & & 1& & & 1& &1 & &1
                  \end{array} \right]. \nonumber
\end{eqnarray}
Note that the sum of the four rows of $H_0$ gives a local parity having support $\{4, 7, 9, 10\}$. Thus we see that the code $\mathcal{B}_{0}$ guarantees that any code symbol is covered by two support disjoint local parities. In general, when $\beta=1$, the construction gives $(r, \delta = 3)_c$ all-symbol locality codes, with length $n = \frac{(r+1)(r+2)}{2}$ (see Section \ref{sec:intro} for a definition of $(r, \delta)_c$ codes). 
\end{example}

\vspace{0.1in}

\begin{example}\label{ex:beta_r}
In this example, let $\beta = r=3$, which implies that the length $n = 15$. When $\beta = r$, we get a
bipartite graph with $r$ vertices on each of the two partitions. The generator matrix of the code $\mathcal{B}_{0}$, which is the span of $6$ codewords, is given (up to permutation of columns) by

{\footnotesize
\begin{equation}
 H_0  =  \left[ \begin{array}{ccccccccccccccc} 1& 1& 1& 1& & & & & & & & & & &\\
                   & & & & 1& 1& 1& 1& & & & & & & \\
		   & & & & & & & & 1& 1& 1& 1& & & \\
		   1& & & &1 & & & & 1& & & & 1& & \\
		   & 1& & & &1 & & & & 1& & & & 1 &  \\
		   & & 1& & & &1 & & & &1 & & & & 1
                  \end{array} \right]. \nonumber
\end{equation}
}

It can be verified that any non-trivial linear combination (resulting in vectors other than those appearing in the rows of $H_0$) of the $6$ vectors results in a codeword whose Hamming weight $\geq 5$. As a result, each of the code symbols $\{c_4, c_8, c_{12}, c_{13}, c_{14}, c_{15}\}$ is covered by only one local parity and thus parallel decoding of two erased symbols may not always be possible. For instance, if symbols $c_3, c_4$ get erased, we must necessarily decode $c_3$ first before decoding $c_4$.

The Tur\'an graph construction, when $\beta = r$ is closely related to the square code construction presented in
\cite{WanZha}. The square code construction was used to guarantee two support disjoint parities for each code word symbol.
For the current example, the closest relative from the square code family has length $16$ and has one
more local parity (in addition to those described by $\mathcal{B}_{0}$) covering the coordinates $\{4, 8, 12, 16\}$.  A second local parity which covers $c_{16}$ can be obtained as a linear combination of all the $7$ parities, and this will be  a parity on the support $\{13, 14, 15, 16\}$.
\end{example}

\subsection{Generalized Hamming Weights of the Constructed Code $\mathcal{B}_{0}$}

\begin{thm} \label{thm:supports_achievability}
Consider the code $\mathcal{B}_{0}$ obtained via the Tur\'an graph construction along with support sets $\{S_i, \ 1 \leq
i \leq b = r + \beta\}$, as described in \eqref{eq:supports_turan}, associated to the $b$ local parities $\{\underline{p}_i\}$. Then the sets $\{S_i, \ 1 \leq i \leq b = r + \beta\}$ achieve the upper bounds given in Lemma \ref{lem:min_set_union} with equality, i.e., $\forall m \in [b]$,  $f_m = e_m$, where $f_m$ is as described by \eqref{eq:fm} and $e_m$ is as defined recursively by \eqref{eq:e_def1} and \eqref{eq:e_def2}.
\end{thm}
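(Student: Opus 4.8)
\textbf{Proof proposal for Theorem~\ref{thm:supports_achievability}.}

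The plan is to prove the lower bound $f_m \geq e_m$ for all $m \in [b]$ directly from the structure of the Tur\'an graph, since the reverse inequality $f_m \leq e_m$ is already supplied by Lemma~\ref{lem:min_set_union}. The key observation is that choosing $m$ of the supports $\{S_i\}$ amounts to choosing a set $W$ of $m$ vertices of the Tur\'an graph $G$; the union $\cup_{i \in W} S_i$ then consists of exactly these $m$ vertex-coordinates together with all edge-coordinates incident to at least one vertex of $W$, i.e.\ all edges of $G$ except those lying entirely inside $[b] \setminus W$. Writing $n - b$ for the total number of edges, we get the exact formula
\begin{eqnarray}
\left| \cup_{i \in W} S_i \right| & = & m + (n - b) - \left( \text{\# edges of } G \text{ inside } [b]\setminus W \right). \nonumber
\end{eqnarray}
Hence minimizing the union over all $m$-subsets $W$ is the same as \emph{maximizing} the number of edges induced on the complementary $(b-m)$-set of vertices. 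So the whole theorem reduces to a clean extremal-graph-theory statement: among all $(b-m)$-vertex induced subgraphs of the balanced complete multipartite graph $T(b,x)$ (with $x = (r+\beta)/\beta$ parts of size $\beta$), the maximum number of edges is achieved, and equals a quantity that, when substituted above, matches $e_m$.

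First I would make the reduction precise and pin down which $(b-m)$-vertex subset of $T(b,x)$ induces the most edges. Since $T(b,x)$ is complete multipartite, an induced subgraph on a vertex set $U$ has exactly $\binom{|U|}{2}$ minus the number of within-part pairs in $U$ edges; so maximizing induced edges is equivalent to \emph{minimizing} $\sum_j \binom{u_j}{2}$ where $u_j$ is the number of chosen vertices in part $j$ and $\sum_j u_j = b-m$. This is minimized by spreading the $b-m$ vertices as evenly as possible across the $x$ parts (each $u_j \in \{\lfloor (b-m)/x\rfloor, \lceil (b-m)/x \rceil\}$), which makes the extremal induced subgraph itself a Tur\'an-type graph. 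I would then carry out the bookkeeping: let $g(\ell) \triangleq \binom{\ell}{2} - \min\{\sum_j \binom{u_j}{2} : \sum u_j = \ell, 0 \le u_j \le \beta\}$ be the max number of edges induced on $\ell = b-m$ vertices, so that $f_m = m + (n-b) - \big( (n-b) - g(b-m)\big) = m + g(b-m)$ — wait, more carefully $f_m = m + (n-b) - \big[\text{edges inside the complement}\big]$, and the minimum of the bracket over $W$ is $g(b-m)$, the \emph{max} induced edges, so $f_m = m + (n-b) - g(b-m)$.

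The remaining and genuinely computational step is to verify that the closed form $m + (n-b) - g(b-m)$ obtained this way satisfies the recursion \eqref{eq:e_def1}--\eqref{eq:e_def2}, i.e.\ equals $e_m$. I would check the base case $m = b$ (then the complement is empty, $g(0)=0$, giving $f_b = b + (n-b) = n = e_b$) and then verify the step: show that $f_{m-1} = f_m - \lceil 2 f_m / m \rceil + (r+1)$ by expanding both sides in terms of $g(\cdot)$ and the evenly-balanced part sizes. This is where the ceiling function and the case distinction on whether $x \mid (b-m)$ have to be handled carefully; using $\beta \mid r$ and $b = r+\beta$ keeps the arithmetic divisible at the right places, and I expect the identity $g(\ell) - g(\ell-1) = \ell - 1 - u_{\min}(\ell)$ (where $u_{\min}(\ell)$ is the smallest part-count when $\ell$ vertices are spread evenly) together with $2 f_m / m$ reducing to a clean integer or half-integer to make everything line up. The main obstacle is precisely this verification that the extremal-graph count reproduces the recursively defined $e_m$ — in particular matching the ceiling operation in \eqref{eq:e_def2} with the floor/ceiling split in the even distribution of vertices across parts; once that algebraic identity is in hand, combining $f_m \ge e_m$ with Lemma~\ref{lem:min_set_union}'s $f_m \le e_m$ finishes the proof.
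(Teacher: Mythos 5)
Your reduction is exactly the paper's: choosing $m$ supports corresponds to choosing $m$ vertices of the Tur\'an graph, $|\cup_{i\in W} S_i| = m + (n-b) - (\text{edges induced on } [b]\setminus W)$, so minimizing the union is maximizing the edges induced on the complementary $b-m$ vertices, and (since the graph is complete multipartite) the maximizer spreads those vertices as evenly as possible over the $x$ parts. The paper then parametrizes $b-m = ux+v$ with $0\le u\le \beta-1$, $0\le v\le x-1$, computes the exact difference $f_m - f_{m-1} = v+ux-u+1$, and --- this is the bulk of its appendix --- verifies algebraically that $2f_m/m = q' + r'/m$ with $q' = (r+1)+(ux+v-u)$ and $1\le r'\le m$, so that $\lceil 2f_m/m\rceil = q'+1$ and hence $f_m - f_{m-1} = \lceil 2f_m/m\rceil - (r+1)$; uniqueness of the recursion with $f_b = e_b = n$ then gives $f_m = e_m$ outright (no appeal to Lemma~\ref{lem:min_set_union} is needed at that point).

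The gap in your write-up is that this verification is precisely what you leave as a promissory note: you state the base case and say you ``expect'' the arithmetic to line up, but the matching of the ceiling in \eqref{eq:e_def2} with the floor/ceiling split of the even vertex distribution is the actual content of the theorem, and it is not routine --- it is where the paper has to exhibit the exact remainder $r'$ and check $1 \le r' \le m$. Moreover, the auxiliary identity you propose to use, $g(\ell)-g(\ell-1) = \ell-1-u_{\min}(\ell)$, is off by one when $x \mid \ell$ (all parts then have $u$ vertices, yet the vertex removed came from a part of size $u$, giving increment $\ell-u$, not $\ell-1-u$), so the case analysis you defer would need repair before the recursion check could go through. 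In short: same approach as the paper, correct reduction and correct identification of the extremal configuration, but the decisive computation is missing and the sketched shortcut for it is not quite right as stated.
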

\begin{proof}
See Appendix \ref{app:supports_achievability}.
\end{proof}

\vspace{0.1in}

We use the following lemma to argue that the $m^{\text{th}}$ GHW of $\mathcal{B}_{0}$ is indeed given by $f_m$ i.e, any other $m$ dimensional subspace of $\mathcal{B}_{0}$ (i.e, other than those generated by $m$ subsets of the basis vectors) will have a support whose cardinality is no less than $f_m$. 

\vspace{0.1in}

\begin{lem}\label{lem:GHW_achievability_by_subsets}
Let $\mathcal{D}$ denote an $[n, t]$ linear code and let $\{{\bf v}_1,  \ldots, {\bf v}_t\}$ be a basis for the
code $\mathcal{D}$. Also, let $R_i = \text{supp}({\bf v}_i)$ and suppose that the sets $\{R_i\}$ are such that
\begin{enumerate}[(a)]
 \item $|R_i \cap R_j| \leq 1, \ \forall i, j, i \neq j$,
 \item any element $\ell \in [n]$ belongs to at most two sets among the sets $\{R_i\}$, and
 \item $|R_i \backslash \cup_{\substack{j=1 \\ j \neq i}}^{t}R_j| \  \geq \ 1$.
\end{enumerate}
Then the generalized Hamming weights of the code $\mathcal{D}$ are given by
\begin{eqnarray}
 d_m(\mathcal{D}) & = & \min_{\substack{\mathcal{I} \subseteq [t] \\ |\mathcal{I}|=m}} \left| \cup_{i \in
\mathcal{I}}R_i\right|, \ \ 1 \leq m \leq t.
\end{eqnarray}
\end{lem}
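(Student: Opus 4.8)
\textbf{Proof proposal for Lemma \ref{lem:GHW_achievability_by_subsets}.}

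The plan is to prove the two inequalities $d_m(\mathcal{D}) \le \min_{|\mathcal{I}|=m}|\cup_{i\in\mathcal{I}}R_i|$ and $d_m(\mathcal{D}) \ge \min_{|\mathcal{I}|=m}|\cup_{i\in\mathcal{I}}R_i|$ separately. The first is immediate: for any $\mathcal{I}\subseteq[t]$ with $|\mathcal{I}|=m$, the subcode $\mathcal{D}_{\mathcal{I}}=\text{span}({\bf v}_i : i\in\mathcal{I})$ has dimension exactly $m$ (condition (c) guarantees each ${\bf v}_i$ has a coordinate not covered by any other basis vector, so the ${\bf v}_i$ are ``independent in a strong sense'' and no nontrivial combination can vanish off $\cup_{i\in\mathcal I}R_i$; in particular dim $=m$), and $\text{supp}(\mathcal{D}_{\mathcal{I}}) \subseteq \cup_{i\in\mathcal{I}}R_i$. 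Hence $d_m(\mathcal{D}) \le |\cup_{i\in\mathcal{I}}R_i|$, and minimizing over $\mathcal{I}$ gives the upper bound.

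The substance is the reverse inequality: every $m$-dimensional subcode $\mathcal{E} < \mathcal{D}$ has $|\text{supp}(\mathcal{E})| \ge \min_{|\mathcal{I}|=m}|\cup_{i\in\mathcal{I}}R_i|$. The strategy I would use is to show that the support of \emph{any} such $\mathcal{E}$ already contains $\cup_{i\in\mathcal{I}}R_i$ for some $m$-subset $\mathcal{I}$. Fix a basis ${\bf w}_1,\dots,{\bf w}_m$ of $\mathcal{E}$ and write each ${\bf w}_j = \sum_{i=1}^t a_{ji}{\bf v}_i$ in terms of the ${\bf v}_i$. Let $\mathcal{J} = \{i \in [t] : a_{ji} \ne 0 \text{ for some } j\}$ be the set of ``active'' basis indices; the key claim is that $|\mathcal{J}| \ge m$ and that $\text{supp}(\mathcal{E}) \supseteq \cup_{i \in \mathcal{J}'}R_i$ for every $m$-subset $\mathcal{J}' \subseteq \mathcal{J}$ — in fact I expect $\text{supp}(\mathcal{E})\supseteq\cup_{i\in\mathcal J}R_i$, which is even stronger. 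Here conditions (a) and (b) do the work. Because $|R_i \cap R_j| \le 1$ and each coordinate lies in at most two of the $R_i$, when we expand a linear combination $\sum_i b_i {\bf v}_i$ the only possible cancellations occur at the at-most-one-element overlaps $R_i \cap R_j$; so for any $i$ with $b_i \ne 0$, at most $|R_i|-1$... actually we need care: the ``private'' coordinate of $R_i$ from condition (c) is never cancelled, so $\text{supp}(\sum_i b_i {\bf v}_i) \ni$ that private coordinate whenever $b_i\ne 0$. More globally, I would argue that for a coordinate $\ell \in R_i$: if $\ell$ lies in only $R_i$, then in $\sum b_j{\bf v}_j$ the $\ell$-entry is $b_i({\bf v}_i)_\ell \ne 0$; if $\ell \in R_i \cap R_{i'}$ (the only other possibility by (b)), then the $\ell$-entry is $b_i({\bf v}_i)_\ell + b_{i'}({\bf v}_{i'})_\ell$, which can vanish only if both $b_i,b_{i'}$ are nonzero and specially related. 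Running this over the whole subcode $\mathcal{E}$ (not a single codeword): a coordinate $\ell \in R_i$ with $i$ active fails to be in $\text{supp}(\mathcal{E})$ only if \emph{every} ${\bf w}_j$ has zero $\ell$-entry, which forces $\ell$ to be a shared coordinate $R_i\cap R_{i'}$ with $i'$ also active and the $2\times |\mathcal J|$-ish coefficient pattern degenerate in a way I can rule out. I would make this precise by picking, for each active $i\in\mathcal J$, a ${\bf w}_{j}$ with $a_{j i}\ne 0$ and noting its private coordinate (from (c)) survives, giving $|\mathcal J|$ linearly independent ``indicator'' directions, hence $|\mathcal J|\ge m$; then a cleaner argument — e.g. projecting $\mathcal{E}$ onto the private coordinates of the $R_i$, $i\in\mathcal{J}$ — shows these are all in $\text{supp}(\mathcal{E})$ and by combining with the overlap structure that all of $\cup_{i\in\mathcal J}R_i$ is covered.

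Combining: $|\text{supp}(\mathcal{E})| \ge |\cup_{i\in\mathcal{J}}R_i| \ge \min_{|\mathcal{I}|=m}|\cup_{i\in\mathcal{I}}R_i|$ since $|\mathcal{J}|\ge m$ and unions only grow. Together with the upper bound this yields $d_m(\mathcal{D}) = \min_{|\mathcal{I}|=m}|\cup_{i\in\mathcal{I}}R_i|$. The main obstacle is the reverse inequality, specifically the bookkeeping that shows no choice of an $m$-dimensional subcode can ``shrink'' its support below the union of $m$ of the $R_i$ by exploiting the single-element overlaps; the sparse-intersection hypotheses (a)–(c) are exactly what prevent this, and the argument is cleanest if phrased via the private coordinates guaranteed by (c) together with the degree-$\le 2$ incidence from (b). I would also double-check the edge case where some ${\bf w}_j$ involves many ${\bf v}_i$ with a long chain of overlaps — condition (a) ($|R_i\cap R_j|\le 1$) keeps each pairwise overlap to a single coordinate, so the cancellation graph is essentially a union of edges, not larger structures, which is what makes the counting go through.
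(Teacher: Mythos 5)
Your upper-bound direction is fine, and your observation that the private coordinates guaranteed by (c) survive in any codeword with a nonzero coefficient is a correct (and useful) ingredient. But the core of your lower-bound strategy has a genuine gap: the claim that $\text{supp}(\mathcal{E})$ contains $\cup_{i\in\mathcal{J}}R_i$ (or even $\cup_{i\in\mathcal{J}'}R_i$ for some $m$-subset $\mathcal{J}'$ of the active indices) is simply false, and the ``degenerate coefficient pattern'' you hope to rule out does occur. Take $n=3$, $t=2$, ${\bf v}_1=(1,1,0)$, ${\bf v}_2=(0,1,1)$, which satisfies (a)--(c), and let $\mathcal{E}$ be the one-dimensional subcode spanned by ${\bf w}={\bf v}_1-{\bf v}_2=(1,0,-1)$. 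Both indices are active, yet $\text{supp}(\mathcal{E})=\{1,3\}$ contains neither $R_1$ nor $R_2$: the shared coordinate $2$ is cancelled in \emph{every} codeword of $\mathcal{E}$. So containment of supports cannot be the mechanism; the lemma's conclusion holds here only because the lost overlap coordinate is \emph{compensated} by the private coordinate of $R_2$, keeping the cardinality at $\min_i|R_i|=2$. Your sketch never supplies this compensation step, and without it the inequality $|\text{supp}(\mathcal{E})|\geq\min_{|\mathcal{I}|=m}|\cup_{i\in\mathcal{I}}R_i|$ does not follow from what you prove (the private coordinates alone only give $|\text{supp}(\mathcal{E})|\geq m$, which is far weaker).

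The paper's proof is exactly a cardinality-with-compensation argument. It puts a basis of the $m$-dimensional subcode in the form $[\,I_m\,|\,B\,]$ with respect to (a permutation of) the ${\bf v}_i$, so the pivot indices form an $m$-subset, and then tracks each element $x$ of $\cup_{i\leq m}R_i$ through the linear combinations. The only problematic case is $x\in R_i\cap R_j$ with $i$ a pivot, $j$ a non-pivot, and the $j$-th column of $B$ supported exactly on row $i$: there $x$ may be cancelled, and it is replaced in the count by a private element of $R_j$ (which exists by (c) and lies in the new support because the combination involves ${\bf v}_j$). Condition (a), $|R_i\cap R_j|\leq 1$, is what guarantees each such private element is used as a substitute at most once, so the count is injective and $|\text{supp}(\mathcal{E})|\geq|\cup_{i\leq m}R_i|\geq\min_{|\mathcal{I}|=m}|\cup_{i\in\mathcal{I}}R_i|$. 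If you want to salvage your write-up, replace the containment claim by this bookkeeping: you will find the hypotheses (a)--(c) are used precisely to make the substitution argument work, not to forbid cancellation.
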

\begin{proof}
See Appendix \ref{app:GHW_achievability_by_subsets}.
\end{proof}

\vspace{0.1in}

We now note that Lemma \ref{lem:GHW_achievability_by_subsets} is readily applicable to the code $\mathcal{B}_{0}$ obtained via the Tur\'an graph construction. From this we conclude that the GHWs of this code $\mathcal{B}_{0}$ achieve the upper bounds given by Theorem \ref{thm:GHW_upper_bounds_C0}. 

\section{A New Upper Bound on Minimum Distance for the Single Erasure Case} \label{sec:all_symbol_single}

The approach described in Section \ref{sec:bounds} directly applies to the setting of codes with all-symbol locality which can handle single erasures. This results in a new upper bound on $d_{\min}$ for this class of codes which is in general tighter than that given by \eqref{eq:gopalan_bound}. Let $\mathcal{C}$ be an $[n, k, d_{\min}]$ code having $(r, \delta = 2)$ all-symbol locality, i.e., any code symbol is covered by a local parity. As with locally reconstructible codes, consider the subcode $\mathcal{B}_{0}$ of $\mathcal{C}^{\perp}$ which is obtained as the span of all codewords of Hamming weight less than or equal to $r+1$, i.e., $\mathcal{B}_{0}  =  \text{span}\left({\bf c} \in \mathcal{C}^{\perp}, |\text{supp}({\bf c})| \leq r+1 \right)$. It is easy to see that $\text{dim}(\mathcal{B}_{0}) \ \geq \ \frac{n}{r+1}$. Lemma \ref{lem:min_set_union} can now be applied to this $\mathcal{B}_{0}$ which enables us to upper bound the GHWs of $\mathcal{B}_{0}$ and in turn, upper bound the minimum distance of $\mathcal{C}$. 

\vspace{0.1in}

\begin{thm} \label{thm:GHW_upper_bounds_C0_all_symbol}
Let $b = \left\lceil \frac{n}{r+1}\right\rceil$. Then, the first $b$ generalized Hamming weights of the subcode
$\mathcal{B}_{0}$ defined above are upper bounded by $d_m(\mathcal{B}_{0}) \leq e_m, \ 1 \leq m \leq b$, where $e_m$ is as recursively defined by $e_b \ = \ n$, and 
\begin{eqnarray}
e_{m-1} & = & e_{m} - \left\lceil \frac{2e_m}{m}\right\rceil + (r+1),  \ 2 \leq m \leq b.
\end{eqnarray}
Furthermore, if we let $\ell$ to denote the unique integer such that
$e_{\ell} < k+\ell < e_{\ell+1}$, the minimum distance of the all-symbol locality code $\mathcal{C}$ is upper bounded by
\begin{eqnarray}
d_{\min}(\mathcal{C}) & \leq & n + 1 - (k + \ell). \label{eq:all_sym_new}
\end{eqnarray}
\end{thm}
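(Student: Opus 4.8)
The plan is to mirror, essentially verbatim, the three-step argument already used to prove Theorem~\ref{thm:GHW_upper_bounds_C0}, with the only change being the weaker lower bound on $\dim(\mathcal{B}_0)$ appropriate to the single-erasure setting. First I would establish that $\dim(\mathcal{B}_0) \geq \frac{n}{r+1}$: since $\mathcal{C}$ has $(r,\delta=2)$ all-symbol locality, every coordinate $i \in [n]$ is covered by some codeword of $\mathcal{C}^\perp$ of weight at most $r+1$, so the supports of such codewords cover $[n]$; picking a maximal linearly independent set of them, their supports still cover $[n]$, and since each support has size at most $r+1$, we need at least $\lceil n/(r+1)\rceil$ of them, giving $\dim(\mathcal{B}_0)\geq \frac{n}{r+1}$. (This is the promised ``easy to see'' statement, and is the analogue of Theorem~\ref{thm:dim_local_code} but with the cleaner bound $\frac{n}{r+1}$ in place of $\frac{2n}{r+2}$.)

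Next I would apply Lemma~\ref{lem:min_set_union} to the supports $\{S_i\}$ of a basis of $\mathcal{B}_0$ consisting of weight-$\leq (r+1)$ codewords, padding each $S_i$ up to size exactly $r+1$ if necessary (exactly as in the proof of Theorem~\ref{thm:GHW_upper_bounds_C0}). With $b = \lceil n/(r+1)\rceil$, Lemma~\ref{lem:min_set_union} yields $d_m(\mathcal{B}_0) \leq f_m \leq e_m$ for $1 \leq m \leq b$, where the $e_m$ are given by the stated recursion $e_b = n$, $e_{m-1} = e_m - \lceil 2e_m/m\rceil + (r+1)$. Since $\mathcal{B}_0 < \mathcal{C}^\perp$, we also have $d_m(\mathcal{C}^\perp) \leq d_m(\mathcal{B}_0) \leq e_m$ for these $m$.

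Finally I would extract the bound on $d_{\min}(\mathcal{C})$ from \eqref{eq:GHW_subcode_upperbound}, namely $d_{\min}(\mathcal{C}) \leq n+1 - g_k(\mathcal{B}_0)$. The point is that the gaps of $\mathcal{B}_0$ are exactly $[n]\setminus\{d_i(\mathcal{B}_0)\}$, and if the first $b$ GHWs of $\mathcal{B}_0$ were as small as possible, i.e.\ equal to $e_1 < e_2 < \cdots < e_b$, then counting: the integers $1,\dots,e_\ell$ contain $\ell$ of these GHWs and hence $e_\ell - \ell$ gaps, while $1,\dots,e_{\ell+1}$ contain $\ell+1$ GHWs and $e_{\ell+1} - (\ell+1)$ gaps; so with $\ell$ the unique integer satisfying $e_\ell < k+\ell < e_{\ell+1}$ (uniqueness follows from the strict monotonicity $e_m - m < e_{m+1} - (m+1)$, which one checks from the recursion), the $k$-th gap of $\mathcal{B}_0$ is at least $k+\ell$. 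Upper bounding the actual GHWs by the $e_m$ only increases the gaps, so $g_k(\mathcal{B}_0) \geq k+\ell$, and \eqref{eq:GHW_subcode_upperbound} gives $d_{\min}(\mathcal{C}) \leq n+1-(k+\ell)$, which is \eqref{eq:all_sym_new}.

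The main obstacle — though it is a mild one here since the heavy lifting is done by Lemma~\ref{lem:min_set_union}, which is already proved — is the bookkeeping in this last step: making precise that replacing each $d_m(\mathcal{B}_0)$ by its upper bound $e_m$ legitimately pushes the $k$-th gap up, and verifying that $\ell$ as defined is well-defined (the intervals $(e_m - m, e_{m+1}-(m+1)]$ tile the relevant range of integers, so some unique $\ell$ captures $k$, provided $k$ lies in the admissible range, which is guaranteed by the rate restriction analogous to Corollary~\ref{cor:rate_bound}). I would simply note that these verifications are identical to those in the proof of Theorem~\ref{thm:GHW_upper_bounds_C0} and refer the reader there, so the proof reduces to ``same as Theorem~\ref{thm:GHW_upper_bounds_C0}, with $b = \lceil n/(r+1)\rceil$ in place of $b = \lceil 2n/(r+2)\rceil$.''
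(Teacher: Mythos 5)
Your proposal is correct and follows essentially the same route as the paper: the paper's proof of this theorem is literally a pointer to the proof of Theorem~\ref{thm:GHW_upper_bounds_C0}, which consists of exactly your three steps (lower-bound $\dim(\mathcal{B}_0)$ by the covering argument, apply Lemma~\ref{lem:min_set_union} to padded basis supports to bound the GHWs, then convert the GHW bounds into the lower bound $g_k(\mathcal{B}_0)\geq k+\ell$ and invoke \eqref{eq:GHW_subcode_upperbound}). Your write-up is, if anything, slightly more explicit than the paper about the dimension bound and the gap-counting bookkeeping.
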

\begin{proof} Similar to the proof of Theorem \ref{thm:GHW_upper_bounds_C0}.
\end{proof}

\vspace{0.1in}

In order to compare the upper bound given by  \eqref{eq:all_sym_new} with that given by \eqref{eq:gopalan_bound}, we note
the bound given by \eqref{eq:gopalan_bound} can be obtained by first upper bounding the GHWs of $\mathcal{B}_{0}$ by
\begin{equation} \label{eq:GHW_all_symbol_single}
 d_m(\mathcal{B}_{0}) \leq m(r+1), 1 \leq m \leq b - 1 , \ \text{and} \ d_b(\mathcal{B}_{0}) \leq  n, 
\end{equation}
where $b = \left\lceil\frac{n}{r+1}\right\rceil$, and then calculating the $k^{\text{th}}$ gap based on these bounds. But from the discussion in Section \ref{sec:bounds} (see \eqref{eq:comparision_bounds_single}),  we know that the bounds on GHWs of $\mathcal{B}_{0}$ given by Theorem \ref{thm:GHW_upper_bounds_C0_all_symbol} are, in general, tighter than the bounds in \eqref{eq:GHW_all_symbol_single} and hence we conclude that the minimum distance bound given by \eqref{eq:all_sym_new} is also tighter, in general, than that given by \eqref{eq:gopalan_bound}. We would, however, like to remark that it is always possible \cite{TamBar} to achieve a minimum distance which is at most one less than that suggested by the upper bound in \eqref{eq:gopalan_bound}. In Fig. \ref{fig:is_vs_as}, we plot the two bounds as a function of dimension $k$, for the case when $n = 18$ and $r = 3$. 
\begin{center}
\begin{figure} [h!]
\begin{center}
\includegraphics[width=3.5in]{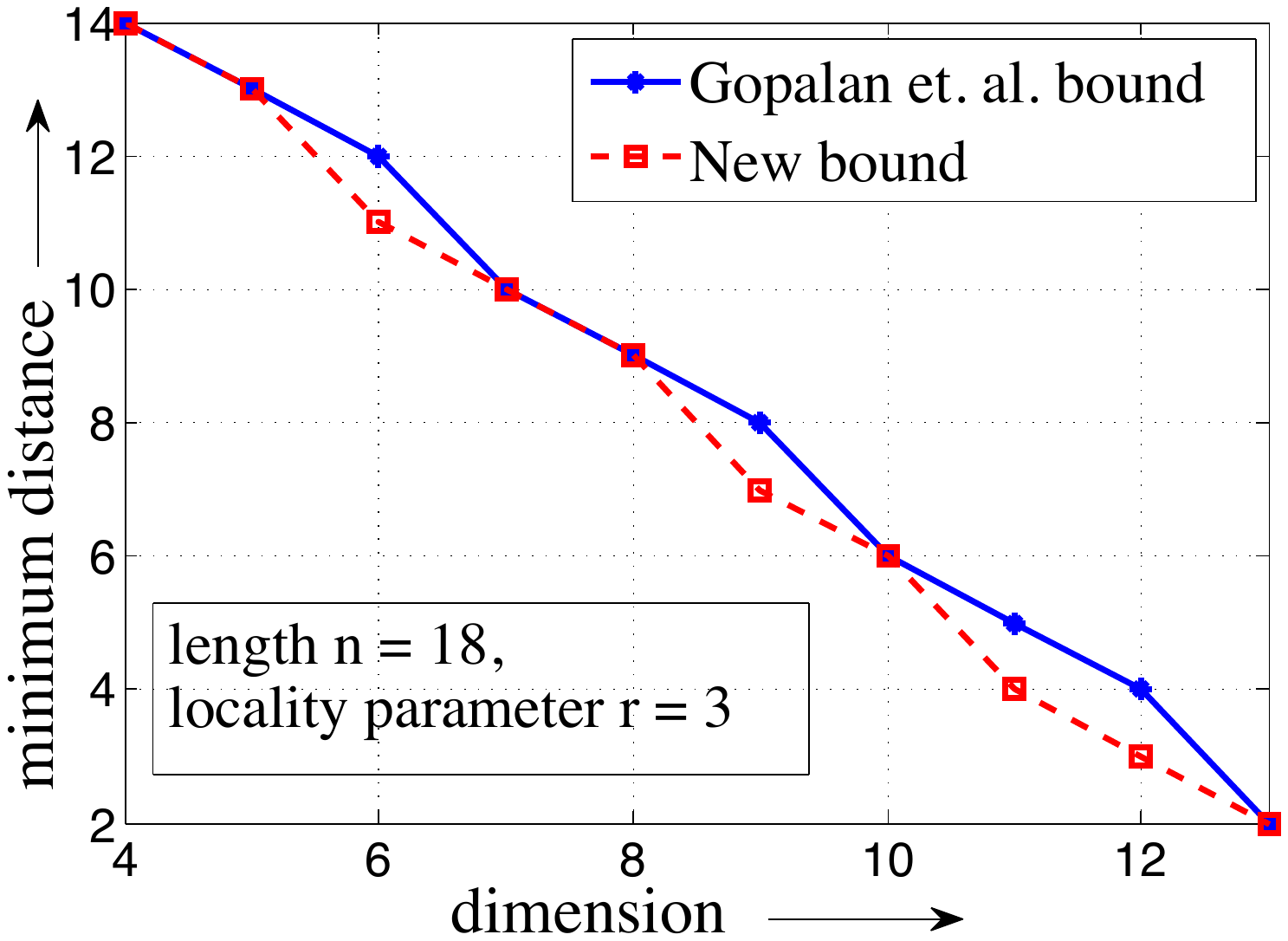}
\end{center}
\caption{Comparing the bounds on $d_{\min}$ for varying $k$ for codes with information and all-symbol locality, with $n=18$ and $r=3$.}
\label{fig:is_vs_as}
\end{figure}
\end{center}

\bibliographystyle{IEEEtran}
\bibliography{multi_coverage_locality}

\appendices

\section{Proof of Theorem \ref{thm:GHW_k-core_dmin}} \label{app:GHW_k-core_dmin}
 Let $\mathcal{B}_{0}$ denote an $[n, t]$ code and let $\mathcal{C}$ denote an $[n, k]$ code, $k \leq n - t$, such that
 \begin{enumerate}[(a)]
  \item $\mathcal{B}_{0} < \mathcal{C}^{\perp}$, and
  \item any $S$ which is a $k$-core of $\mathcal{B}_{0}$ is also a $k$-core of $\mathcal{C}^{\perp}$.
 \end{enumerate}
 We claim that for any set $S \subseteq [n]$, $|S| = g_k(\mathcal{B}_{0})$, there exists $S' \subseteq S$,
 $|S'|=k$ such that $S'$ is a $k$-core of $\mathcal{B}_{0}$.  Supposing that this is true, it then means that for any set
 $S \subseteq [n]$, $|S| = g_k(\mathcal{B}_{0})$, $\text{rank}\left(G|_S\right)  =  k$. Clearly, this would imply that
 $d_{\min}(\mathcal{C}) \geq n - |S| + 1 = n - g_k(\mathcal{B}_{0}) + 1$. However, since $d_{\min}(\mathcal{C})  = n -
 g_k(\mathcal{C}^{\perp}) + 1 \leq n - g_k(\mathcal{B}_{0}) + 1$, we conclude that  $d_{\min}(\mathcal{C})  =  n + 1 - g_{k}(\mathcal{B}_{0})$. 

We will now prove our claim  that for any set $S \subseteq [n]$, $|S| = g_k$, there exists $S' \subseteq S$, $|S'|=k$ such that $S'$ is a $k$-core of  $\mathcal{B}_{0}$. Toward this, let $\mathcal{C}'$ denote the code $\mathcal{B}_{0}$ shortened to the set $S$. We note that  $\text{dim}(\mathcal{C}') \leq g_k - k$. To see why this is true, if we suppose that $\text{dim}(\mathcal{C}') > g_k - k$,  then it will imply that $d_{g_k - k + 1}(\mathcal{B}_{0}) \leq g_k$, where $d_{g_k - k+ 1}(\mathcal{B}_{0})$ denotes the $(g_k  - k +1)^{\text{th}}$ Generalized Hamming weight of $\mathcal{B}_{0}$. But this contradicts the fact that the number of GHWs of  $\mathcal{B}_{0}$ till $g_k$ is exactly $g_k - k$ and hence we get that  $\text{dim}(\mathcal{C}') \leq g_k - k$. Now, let  $\rho$ denote the dimension of $\mathcal{C}'$ and let $H' = [ I_{\rho} |P_{\rho \times (g_k - \rho)}]$ denote  a generator  matrix of $\mathcal{C}'$, up to permutation of columns. If $T$ denotes the support of the matrix $P_{\rho \times (g_k -  \rho)}$, then any $S' \subseteq T$, $|S'| = k$ is $k$-core of $\mathcal{B}_{0}$.

\section{Proof of Theorem \ref{thm:GHW_k-core}} \label{app:GHW_k-core}
 Let $\mathcal{B}_{0}$ denote an $[n, t]$ code and let $\mathcal{C}$ denote an $[n, k]$ code, $k \leq n - t$, such that
 \begin{enumerate}[(a)]
  \item $\mathcal{B}_{0} < \mathcal{C}^{\perp}$, and
  \item any $S$ which is a $k$-core of $\mathcal{B}_{0}$ is also a $k$-core of $\mathcal{C}^{\perp}$.
 \end{enumerate}
Note that Theorem \ref{thm:GHW_k-core_dmin} implies that $g_k(\mathcal{C}^{\perp}) =  g_k(\mathcal{B}_{0})$. This determines the last $n - g_k(\mathcal{B}_{0})$ GHWs of $\mathcal{C}^{\perp}$ and are given by
 \begin{eqnarray}
 d_i(\mathcal{C}^{\perp}) & = & i + k, \ g_k(\mathcal{B}_{0}) - k +1 \leq i \leq n - k.
 \end{eqnarray}

Assuming that $g_k(\mathcal{B}_{0}) - k \geq 1$, it now remains to be proved that 
\begin{eqnarray} \label{eq:temp_proof_GHW_kcores}
d_i(\mathcal{C}^{\perp}) & = & d_i(\mathcal{B}_{0}), \ 1 \leq i \leq g_k(\mathcal{B}_{0}) - k,
\end{eqnarray}
i.e., the first $g_k(\mathcal{B}_{0}) - k$ GHWs of $\mathcal{C}^{\perp}$ are exactly same as those of $\mathcal{B}_{0}$.  Toward this, we first note that any set $S$ which is a $b$-core of $\mathcal{B}_{0}$ is also a $b$-core of $\mathcal{C}^{\perp}$, for any  $b$ such that $b < k$. This is because for any $S$ which is a $b$-core of $\mathcal{B}_{0}$ ($b < k$), there exists a $k$-core $S'$ of $\mathcal{B}_{0}$ such that $S \subseteq S'$. We also claim that for any set $S \subseteq [n]$, $|S| = g_b$, there exists $S' \subseteq S$, $|S'|=b$ such that $S'$ is a $b$-core of  $\mathcal{B}_{0}$. Proof is similar to the claim regarding $k$-cores which appeared in the proof of Theorem \ref{thm:GHW_k-core_dmin}. 

We will now prove \eqref{eq:temp_proof_GHW_kcores} via induction starting at $i=1$. Let us denote $\ell = g_k(\mathcal{B}_{0}) - k$. 
Note that 
\begin{eqnarray} \label{eq:k_uppperbound}
k & > & (d_1(\mathcal{B}_{0}) - 1 ) + (d_2(\mathcal{B}_{0}) - d_1(\mathcal{B}_{0})- 1 ) + \ldots + (d_{\ell}(\mathcal{B}_{0}) - d_{\ell-1}(\mathcal{B}_{0})- 1 ).
\end{eqnarray}
Now, suppose that $d_1(\mathcal{C}^{\perp}) \leq d_1(\mathcal{B}_{0})-1$. Clearly any $S$ such that $|S|=d_1(\mathcal{B}_{0})-1$ is an $|S|$-core of $\mathcal{B}_{0}$. From \eqref{eq:k_uppperbound}, we see that $d_1(\mathcal{B}_{0}) - 1 < k$ and hence $S$ is also an $|S|$-core of $\mathcal{C}^{\perp}$. But this contradicts the assumption that $d_1(\mathcal{C}^{\perp}) \leq d_1(\mathcal{B}_{0})-1$ and hence we conclude that $d_1(\mathcal{C}^{\perp}) = d_1(\mathcal{B}_{0})$. Next, assume that $d_{i}(\mathcal{C}^{\perp}) = d_{i}(\mathcal{B}_{0})$, for some $i$ such that $1 \leq i \leq \ell-1$. We will now prove that $d_{i+1}(\mathcal{C}^{\perp}) = d_{i+1}(\mathcal{B}_{0})$. We consider the following cases:
\begin{enumerate}[(a)]
\item $d_{i+1}(\mathcal{B}_{0}) = d_{i}(\mathcal{B}_{0}) + 1$. In this case, note that 
\begin{eqnarray}
d_i(\mathcal{B}_{0}) & \stackrel{(i)}{=} & d_i(\mathcal{C}^{\perp}) \\
& \stackrel{(ii)}{<} & d_{i+1}(\mathcal{C}^{\perp}) \\
& \stackrel{(iii)}{\leq} & d_{i+1}(\mathcal{B}_{0}) \\
& = & d_{i}(\mathcal{B}_{0}) + 1, 
\end{eqnarray}
where $(i)$ follows from the induction hypothesis and $(ii)$ follows from \eqref{eq:ordering_GHW}.
This then implies that $(iii)$ must be an equality, i.e., $d_{i+1}(\mathcal{C}^{\perp}) = d_{i+1}(\mathcal{B}_{0})$.
\item $d_{i+1}(\mathcal{B}_{0}) > d_{i}(\mathcal{B}_{0}) + 1$. In this case, if we let $m = d_{i+1}(\mathcal{B}_{0}) - (i+1)$, note that 
\begin{eqnarray}
g_{m}(\mathcal{B}_{0}) & = & d_{i+1}(\mathcal{B}_{0}) - 1.
\end{eqnarray}
Now, if $S$ is any set such that $|S| = g_{m}(\mathcal{B}_{0})$, then there exists $S' \subseteq S$, $|S'| = m$ and $S'$ is an $m$-core of $\mathcal{B}_{0}$. From \eqref{eq:k_uppperbound}, we see that $m < k$ and hence $S'$ is also an $m$-core of $\mathcal{C}^{\perp}$. Now, without loss of generality, suppose that  $d_{i+1}(\mathcal{C}^{\perp}) = d_{i+1}(\mathcal{B}_{0})-1$. Also, let $\mathcal{D}$ denote an $(i+1)$-dimensional subcode of $\mathcal{C}^{\perp}$ having support $S_{\mathcal{D}}$ such that $|S_{\mathcal{D}}| = d_{i+1}(\mathcal{C}^{\perp})$. Note that 
for any set $T \subseteq S_{\mathcal{D}}$ such that $|T| = |S_{\mathcal{D}}| - i =  m$, one can find a non-zero vector in $\mathcal{D}$ whose support is fully contained within $T$ and hence there cannot exist an $m$-core of $\mathcal{C}^{\perp}$ within $S_{\mathcal{D}}$. However, we know that this is not true and hence we conclude that $d_{i+1}(\mathcal{C}^{\perp}) = d_{i+1}(\mathcal{B}_{0})$.
\end{enumerate}

\section{Proof of Theorem \ref{thm:dim_local_code}} \label{app:dim_local_code}
Consider the code
\begin{eqnarray} 
\mathcal{B}_{0} & = & \text{span}\left({\bf c} \in \mathcal{C}^{\perp}, |\text{supp}({\bf c})| \leq r+1 \right)
\end{eqnarray}
and let $\mathcal{B} = \{\bf{c}_1, \ldots, {\bf c}_b\}$ denote a basis for $\mathcal{B}_{0}$ such that
$|\text{supp}({\bf c}_i)| \leq r+1, \forall i \in [b]$. Also, let $S_i = \text{supp}({\bf c}_i)$. Define the quantity
\begin{eqnarray}
s_i & = & \left|S_i \backslash \cup_{\substack{j = 1\\j \neq i}}^{b}S_j\right|, \ 1 \leq i \leq b.
\end{eqnarray}
We claim that for the code $\mathcal{C}$ to be locally reconstructible, it must necessarily be true that $s_i \leq 1,
\forall i \in [b]$. To see this, suppose that for some $i$, $s_i \geq 2$ and let $\{\ell_1, \ell_2\} \subseteq S_i \backslash
\cup_{\substack{j = 1\\j \neq i}}^{b}S_j$. Then, if $A_{\ell_1}$ and $A_{\ell_2}$, respectively denote all the
local parities covering the code symbols $c_{\ell_1}$ and $c_{\ell_2}$, it would mean that $A_{\ell_1} = A_{\ell_2}$. This is
because  $\mathcal{B}$ is a basis and any linear combination whose support contains ${\ell_1}$ will also contain ${\ell_2}$.
The claim now follows by noting from Lemma \ref{lem:loc_char_A} that the code cannot be locally reconstructible unless
$A_{\ell_1} \neq A_{\ell_2}$.

In order to proceed and complete the proof of the theorem, we note that $n-\sum_{i=1}^{b}s_i$ code symbols are covered by
more than one of the sets $S_i, i \in [b]$ and hence it must be true that
\begin{eqnarray}
 \sum_{i=1}^{b}s_i + 2\left(n-\sum_{i=1}^{b}s_i\right) & \leq & b(r+1) \\
\implies 2n - \sum_{i=1}^{b}s_i & \leq & b(r+1) \\
\implies 2n - b & \leq & b(r+1) \label{eq:temp1_proof_dim_local_code}\\
\implies b \geq \frac{2n}{r+2},
\end{eqnarray}
where \eqref{eq:temp1_proof_dim_local_code} follows since $s_i \leq 1, \forall i \in [b]$.

\section{Proof of Lemma \ref{lem:min_set_union}} \label{app:min_set_union}
We will prove the lemma via induction, starting at $m =b$ and decrementing $m$ at each step. Clearly $f_b \leq e_b = n$.
Now assuming that for some $m, 2 \leq m \leq b$, $f_m \leq e_m$, we will prove that $f_{m-1} \leq e_{m-1}$. Without loss of
generality, let $\{S_i,  1 \leq i \leq m\}$ be such that $|\cup_{i=1}^{m}S_i| = f_m$. Define
\begin{eqnarray}
s_i & = & \left|S_i \backslash \cup_{\substack{j = 1\\j \neq i}}^{m}S_j\right|, \ 1 \leq i \leq m.
\end{eqnarray}
Then, noting that $n-\sum_{i=1}^{m}s_i$ elements are covered by more than one of the sets $S_i, i \in [m]$, we get that
\begin{eqnarray}
\sum_{i=1}^{m}s_i + 2\left(f_m-\sum_{i=1}^{m}s_i\right) & \leq & m(r+1) \\
\implies \sum_{i=1}^{m}s_i & \geq & 2f_m - m (r+1).
\end{eqnarray}

Also, let $s* = \max_{i \in [m]}s_i$ and without loss of generality, assume that $s_1 = s*$. Note that in this case, $s_1
\geq \frac{\sum_{i=1}^{m}s_i}{m}$. Now, if we consider union of the sets $\{S_i, \ 2 \leq i \leq m\}$, we get that
\begin{eqnarray}
 |\cup_{i=2}^{m}S_i| & = & f_m - s_1 \\
& \leq & f_m - \frac{\sum_{i=1}^{m}s_i}{m} \\
& \leq & f_m -  \frac{2f_m-m(r+1)}{m} \\
& = & \frac{m-2}{m}f_m + (r+1) \\
& \leq & \frac{m-2}{m}e_m + (r+1) \\
& = & e_m - \frac{2e_m}{m} +(r+1), \\
\end{eqnarray}
which implies that $f_{m-1}  \leq  e_m - \frac{2e_m}{m} +(r+1)$. Finally, noting that $f_{m-1}$ is an integer, we get
that
\begin{eqnarray}
 f_{m-1} & \leq & e_m - \left\lceil \frac{2e_m}{m} \right\rceil +(r+1) \ = \ e_{m-1}
\end{eqnarray}

\section{Proof of Theorem \ref{thm:supports_achievability}} \label{app:supports_achievability}
Consider any set of $m$ parities, say $\{\underline{p}_1, \ldots, \underline{p}_m \}$ having supports $S_1, \ldots S_m$. Note that by our construction the parity $\underline{p}_i$ corresponds to the vertex $i$ in the Tur\'an graph. The cardinality of union of the supports $S_1, \ldots S_m$ can be calculated from the graph as $|\cup_{i=1}^{m}S_i|  =  m + |E|$, where $E$ is the set of all the edges in the graph with at least one of the end points being a vertex belonging to the set $[m]$. The quantity $|E|$ can be equivalently be  computed by first counting the number of edges in the graph restricted to the remaining  vertices $\{m+1, m+2, \ldots, r+\beta\}$ and then subtracting it from the total number of edges in the original graph. Thus, for calculating $f_m$, it is sufficient to find a restricted graph on $r + \beta -m$ vertices having the maximum number of edges. Let $r+\beta-m=ux+v, 0 \leq u \leq \beta -1, 0 \leq v \leq x-1$. Then, it is easy to see that  the number of edges in a restricted graph on $r + \beta -m$ vertices is maximized if the restricted graph consists of  $u+1$ vertices each from any $v$ out of the $x$ partitions and $u$ vertices each from the remaining $x-v$ partitions. 

It is straightforward to see now that the difference $f_m - f_{m-1}, 2 \leq m \leq r+\beta$, is given by 
\begin{eqnarray} \label{eq:fm-fm-1}
f_m - f_{m-1} & = & (u+1)v + u(x-v-1) + 1 \nonumber \\
& = & v + ux-u +1. \label{eq:fm-fm-1}
\end{eqnarray}
The expression in \eqref{eq:fm-fm-1} is evaluated for $m, 2 \leq m \leq r+\beta$ and is shown in Fig. \ref{fig:pole_structure}. For the array given in Fig. \ref{fig:pole_structure}, we  number the rows from $0$ to $\beta -1 $ and the columns from $0$ to $x-1$. Then, the value of $f_m - f_{m-1}$ is simply the $(u, v)^{\text{th}}$ entry in this array.
\begin{center}
\hspace*{2.0in}
\begin{figure} [h!]
\begin{center}
\includegraphics[width=4.5in]{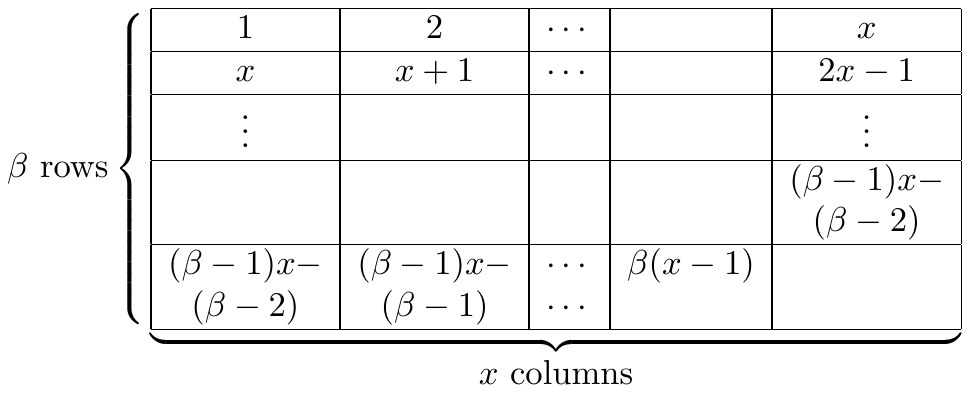}
\end{center}
\caption{$f_m-f_{m-1}$, $r+\beta \geq m \geq 2$ of $\mathcal{B}_{0}$ obtained via Tur\'an graph construction, where the sequence of differences are in the descending order and the matrix has to be read row after row from left to right.}
\label{fig:pole_structure}
\end{figure}
\end{center}
Now, we will show that the sequence $\{f_m\}$ as defined by \eqref{eq:fm-fm-1}, satisfies the recursion given in \eqref{eq:e_def1} and \eqref{eq:e_def2}, i.e., $f_{r+\beta}=n$ and
\begin{equation}
 f_m - f_{m-1} = \left \lceil \frac{2f_m}{m} \right \rceil  - (r+1).
\end{equation}
Since the sequence $\{e_m\}, 1 \leq m \leq b=r+\beta$ is unique  (given that $e_m = n$), it then follows that $f_m = e_m, 1 \leq m \leq b=r+\beta$, which will complete our proof. We begin by noting that 
\begin{eqnarray} \label{eq:fm_temp}
f_m & = & n - \sum_{i=r+\beta}^{m+1} (f_i - f_{i-1}).
\end{eqnarray}
Next, note that the sum $\sum_{i=r+\beta}^{m+1} (f_i - f_{i-1})$ can be calculated from the array in Fig. \ref{fig:pole_structure} as sum of the first $r+\beta -m$ entries, where the entries are read from left to right in each row and the rows are read from top to bottom, i.e.,
\begin{eqnarray} \label{eq:usearray}
\sum_{i=r+\beta}^{m+1} (f_i - f_{i-1}) & = & \sum_{i=1}^{ux-u+v}i + \sum_{i=1}^{u}(ix-i+1),
\end{eqnarray}
where the first term on the R.H.S counts all the unique elements once and the second term on the R.H.S counts the repeated terms. Combining \eqref{eq:fm_temp} and \eqref{eq:usearray}, we get that
\begin{eqnarray}
 \frac{2f_m}{m} & = & \frac{(r+\beta)(r+2) - (ux+v-u)(ux+v-u+1) - u(u+1)x + u(u-1)}{(r+\beta)-(ux+v)} \\
              & = & \frac{[(r+\beta)-(ux+v)][(r+1)+(ux+v-u)] + (r+\beta) -ux - v\beta + uv}{(r+\beta) - (ux+v)} \\
              & = & \frac{[(r+\beta)-(ux+v)]q' + r'}{(r+\beta) - (ux+v)},
\end{eqnarray}
where $q' = (r+1)+(ux+v-u)$ and $r' = (r+\beta) -ux - v\beta + uv$. It is straightforward to check that $1 \leq r' \leq (r+\beta)-(ux+v)$, which implies that 
\begin{equation} \label{eq:vneq0}
\left \lceil \frac{2f_m}{m} \right \rceil = q'+1 =  r+1 + ux +v -u+1.
\end{equation}
Combining \eqref{eq:fm-fm-1} with \eqref{eq:vneq0}, we finally get that 
\begin{equation}
 f_m - f_{m-1} = \left \lceil \frac{2f_m}{m} \right \rceil  - (r+1).
\end{equation}

\section{Proof of Lemma \ref{lem:GHW_achievability_by_subsets}} \label{app:GHW_achievability_by_subsets}

Consider an $m$ dimensional subcode $\mathcal{D}'$ of $\mathcal{D}$ and let $\mathcal{D}'$ have a basis
$\{ {\bf u_1}, \ldots,  {\bf u_m}\}$. Without loss of generality, let us assume that the basis of $\mathcal{D}'$ is obtained
as
\begin{eqnarray} \label{eq:matrix_B}
 \left[ \begin{array}{c}{\bf u_1} \\ \vdots \\ {\bf u_m} \end{array} \right] & = & \left[ \begin{array}{c|c} I_m & B_{m
\times (b-m)} \end{array} \right]  \left[ \begin{array}{c}{\bf v_1} \\ \vdots \\ {\bf v_m} \\ {\bf v_{m+1}}\\ \vdots \\ {\bf
v_{b}} \end{array}\right].
\end{eqnarray}
Also, let $\{R_i', 1 \leq i \leq m\}$ denote the supports of the vectors $\{ {\bf u_1}, \ldots,  {\bf u_m}\}$.
We claim that $|\cup_{i=1}^{m}R_i'| \geq |\cup_{i=1}^{m}R_i|$. To see this, we consider
any element $x \in \cup_{i=1}^{m}R_i$ and examine what happens to it when the $m$ linear combinations are taken. We divide
the discussion into the following cases:
\begin{enumerate}[(a)]
\item $x \in R_i, i \leq m$ and does not belong to any other support set. Clearly, $x \in R_i'$.
\item $x \in R_i, R_j$ such that $1 \leq i < j \leq m$. By assumption, $x$ then does not belong to any other support set
and clearly in this case, $x \in R_i', R_j'$.
\item $x \in R_i, R_j$ such that $ i \leq m, j \geq m+1$. Note that $x$ then does not belong to any other support set.
Now, consider the $j^{\text{th}}$ column of the matrix $[I | B]$ and let us call it as ${\bf b}$. We consider three sub-cases for this
situation based on the column weight of ${\bf b}$.
\begin{enumerate}[(i)]
\item Column weight of ${\bf b}$ is $0$. Clearly, then $x \in R_i'$.
\item Column weight of ${\bf b}$ is $1$, say $b_{\ell} \neq  0$. Suppose, $\ell \neq i$, then $x \in R_i', R_{\ell}'$. Now if $\ell = i$, the element $x$ need not be present in $R_i'$. However, for the purposes of counting $|\cup_{i=1}^{m}R_i'|$, we could replace $x$ with $y$ where $y$ is one of the elements covered only by $R_j$ (note that the such an element exists by assumption). This works because, if this particular case does occur, we will never again have to seek one of the elements
covered only by $R_j$. This is because in order for this to happen again it must be true that there exists another element
$x' \in R_i \cap R_j$, but this is contrary to our assumption that any two support sets have intersection at most $1$.
\item Column weight of ${\bf b}$ is $2$ or more, say $b_{\ell_1}, b_{\ell_2} \neq  0$. Without loss of generality if assume
that $\ell_1 \neq i$, then $x \in S_{\ell_1}'$.
\end{enumerate}
\end{enumerate}
Thus we see that in all the cases we either do not lose the element $x$ or there is another unique element $y$ which can
compensate for $x$ while counting the support cardinality after the linear combinations are taken. Hence we conclude that
$|\cup_{i=1}^{m}R_i'| \geq |\cup_{i=1}^{m}R_i|$.

\end{document}